\documentclass[draftclsnofoot,12pt, onecolumn]{IEEEtran}
\usepackage{cite}
\usepackage{amsmath,amssymb,amsfonts,dsfont}
\usepackage{algorithm}
\usepackage{algorithmic}
\usepackage{mathtools}

\usepackage{graphicx}
\usepackage{subfigure}
\usepackage{textcomp}
\usepackage{xcolor}
\usepackage{booktabs}
\usepackage{breqn}
\usepackage{subeqnarray}
\usepackage{cases}
\usepackage{setspace}
\usepackage{amsthm}
\usepackage{bbm}
\usepackage{bm}
\usepackage{lipsum}
\usepackage{enumitem}

\allowdisplaybreaks[4]

\newtheorem{lem}{Lemma}

\theoremstyle{remark}

\newtheorem{rem}{Remark}

\DeclareMathOperator*{\argmax}{arg\,max}

\def\BibTeX{{\rm B\kern-.05em{\sc i\kern-.025em b}\kern-.08em
    T\kern-.1667em\lower.7ex\hbox{E}\kern-.125emX}}

\columnsep 0.21in
\begin{document}

\title{Dynamic Task Software Caching-assisted Computation Offloading for Multi-Access Edge Computing\\
}
\author{Zhixiong~Chen,~\IEEEmembership{Student Member,~IEEE},
Wenqiang~Yi,~\IEEEmembership{Member,~IEEE},
Atm~S.~Alam,~\IEEEmembership{Member,~IEEE},
and Arumugam~Nallanathan,~\IEEEmembership{Fellow,~IEEE}
\thanks{Zhixiong Chen, Wenqiang Yi, Atm S.~Alam and Arumugam Nallanathan are with the School of Electronic Engineering and Computer Science, Queen Mary University of London, London, U.K. (emails: \{zhixiong.chen, w.yi, a.alam, a.nallanathan\}@qmul.ac.uk)}
}

\maketitle
\vspace{-1.8cm}
\begin{abstract}
In multi-access edge computing (MEC), most existing task software caching works focus on statically caching data at the network edge, which may hardly preserve high reusability due to the time-varying user requests in practice.
To this end, this work considers dynamic task software caching at the MEC server to assist users' task execution.
Specifically, we formulate a joint task software caching update (TSCU) and computation offloading (COMO) problem to minimize users' energy consumption while guaranteeing delay constraints, where the limited cache size and computation capability of the MEC server, as well as the time-varying task demand of users are investigated.
This problem is proved to be non-deterministic polynomial-time hard, so we transform it into two sub-problems according to their temporal correlations, i.e., the real-time COMO problem and the Markov decision process-based TSCU problem.
We first model the COMO problem as a multi-user game and propose a decentralized algorithm to address its Nash equilibrium solution.
We then propose a double deep Q-network (DDQN)-based method to solve the TSCU policy.
To reduce the computation complexity and convergence time, we provide a new design for the deep neural network (DNN) in DDQN, named state coding and action aggregation (SCAA). In SCAA-DNN, we introduce a dropout mechanism in the input layer to code users' activity states. Additionally, at the output layer, we devise a two-layer architecture to dynamically aggregate caching actions, which is able to solve the huge state-action space problem.
Simulation results show that the proposed solution outperforms existing schemes, saving over 12\% energy, and converges with fewer training episodes.
\end{abstract}
\begin{IEEEkeywords}
Computation offloading, deep reinforcement learning, game theory, multi-access edge computing, software caching
\end{IEEEkeywords}

\section{Introduction}
With the development of wireless communications and the proliferation of smart end devices, a large number of computation-intensive applications have emerged to bring powerful functions and ultimate experience to users, such as augmented reality, object recognition, interactive gaming, speech recognition, and natural language processing \cite{9363323}. These applications require massive computational resources and energy. However, the limited computing capability and battery capacity of the mobile devices are generally difficult to meet the computation requirements while executing these applications \cite{mao2017survey}. To cope with it, multi-access edge computing (MEC) has attracted significant attention in industry and academia. MEC deploys cloud-computing capabilities and storage resources within the network edge near to users, such as base stations (BS) and access points (AP) \cite{sabella2016mobile}. It allows mobile users to offload their computation tasks to the network edge with higher computation capability.

\subsection{Related Works}
From the users' perspective, a critical application regarding the MEC is computation offloading (COMO) which is able to save energy and/or speed up the process of computation \cite{mach2017mobile}.
Emerging research towards this direction mainly focus on the joint optimization of the resource allocation and COMO policies.
The authors in \cite{sun2020online} developed an online binary task offloading algorithm to reduce task execution delay in a cellular MEC system.
In \cite{alameddine2019dynamic}, the authors proposed a task offloading and computing resource allocation approach by considering the heterogeneity in the latency requirements of different tasks.
The authors in \cite{yu2020joint} optimized a partial offloading policy in a unmanned aerial vehicle-enabled MEC system to minimizing the task computing delay of clients.
\cite{zhang2020dynamic} studied a joint partial task offloading, computation resource, and radio resource allocation problem to maximize the task computing energy efficiency.
In \cite{zhao2021energy}, the authors investigated an energy consumption minimization problem subject to the latency requirement by optimizing task offloading ratio, transmission power, and subcarrier \& computing resource allocation.

Computing a task requires both the user task data as the input parameters and the corresponding code/task software that processes it. Take face recognition as an example; if a mobile phone needs to identify whether a person is a legitimate user, it takes a photo (input parameters) and uses it as the input data of the face recognition software. After computing, the software output whether the person is a legitimate user, namely computing results.
Existing literature on computation offloading can be classified into two main scenarios: 1) The MEC server has unlimited storage space that can store all task software for users \cite{8467992, 8314696}. In this case, users only need to transmit input parameters to the MEC server for task execution; and 2) The cache size of the MEC server is limited and hence the server fails to cache all task software. Users need to upload both task software and input parameters under this scenario \cite{wen2020joint, yan2021pricing, 9509427, 9076825}. Since the second scenario can be used to characterize most applications in MEC, we consider the second scenario in this work.
The data uploading process and task execution process will generate substantial energy consumption and delay. To improve the computing performance of MEC, caching task computing results at the MEC server has been identified to reduce the frequency of repeated data transmission and task computations \cite{8401954}. It proactively caches some task computing results that may be reused in future task execution \cite{xing2019energy, yang2019joint}.
Although the task computing results caching technique can reduce task execution delay and energy consumption to a certain degree, it is impractical since the task computing results are hardly reusable. In general, computation tasks consist of input parameters and the corresponding task software. The task software is fixed and it can output different computation results under different input parameters. To improve the reusability of cached data, the task software caching technique was proposed to cache the task software at the MEC server to assist the COMO.

Specifically, our previous work \cite{9509427, chen2020dynamic, 9013927} integrated the task program caching mechanism into the COMO technique and designed a model-based task program caching algorithm to minimize the average energy consumption or latency for all time slots.
The authors in \cite{bi2020joint} investigated a single MEC server that assists a mobile user in executing a sequence of computation tasks and used the task program caching technique to reduce the computation delay and energy consumption of the mobile user.
The authors developed an MEC service pricing scheme to coordinate with the service caching decisions and control wireless devices' task offloading behaviours in a cellular network to minimize task execution delay and cost \cite{yan2021pricing}.
The authors in \cite{wen2020joint} provided a joint caching, computation, and communications mechanism to minimize the weighted sum energy consumption subject to the caching and deadline constraints.
In \cite{zhang2018joint}, the authors investigated a joint COMO, content caching, and resource allocation problem in a general MEC network to minimize the total execution latency of computation tasks.
\subsection{Motivation and Contributions}
Existing works on task computing results caching \cite{xing2019energy, yang2019joint} or task software caching-based MEC \cite{9509427, 9076825,  9013927, chen2020dynamic, bi2020joint, yan2021pricing, wen2020joint, zhang2018joint} statically cache data at the network edge, they prefer to cache data that remains unchanged over a relatively long time.
In fact, users' demand for computation tasks dynamically changes over time. The static caching policy cannot preserve the high reusability of the cached data.
Thus, it is important to design learning-based methods to predict the users' task demand and adjust the cache memory dynamically for improving the reusable rate of the cached data.
Moreover, it is noted that most existing works in model-free learning-based content caching design, like \cite{8778670,9110932}, assumed that the task data size is homogeneous, while in practice this assumption does not always hold.
Thus, it is valuable to design a new task software caching update (TSCU) and COMO algorithm which is capable of automatically adapting to the heterogeneous size of task software and dynamically adjust the cache space in real-time according to user requests.

Motivated by this, we consider the dynamic task software caching technique at an MEC  network. Specifically, the task software in the cache memory is updated periodically based on the prediction of users' task computation demand  to assist users' COMO. With the assistance of task software caching, users can accomplish their tasks through either local computing, caching-based COMO, or non-caching-based COMO.
The main contributions of this paper are listed in the following:
\begin{itemize}
  \item We formulate a joint TSCU and COMO problem in a multi-channel wireless environment to minimize the average energy consumption of mobiles users over each time slot while satisfying the task execution delay tolerance. It is intractable to solve its optimal solution due to the lack of user task request information and the complexity of addressing efficient wireless access coordination among multiple users for COMO. With the aid of the maximum cardinality bin packing problem, we theoretically prove that the considered problem is non-deterministic
      polynomial-time hard (NP-Hard).

  \item To tackle this NP-Hard problem, we first decompose it into two distributed sub-problems, i.e., the COMO problem at the user side and the TSCU problem at the MEC server side, and solve them one by one. Since the COMO problem involving a combinatorial optimization over the multi-dimensional discrete space is challenging, we reformulate it as a multi-user COMO game, and theoretically prove the existence of the Nash equilibrium (NE) solution of the COMO game. Based on detailed analysis, We then propose a decentralized algorithm to address its NE solution with a convergence guarantee.

  \item For the second sub-problem, we propose a double deep Q-network (DDQN)-based method to learning the optimal TSCU policy under unknown user task requests information. The massive tasks with heterogeneous data size in the task library result in a high-dimension and complex caching action space which intractable to solve. Moreover, directly using the user request state as the deep neural network (DNN) input may improve the learning complexity. These factors hinder the convergence of the DDQN. To cope with these challenges, we proposed a state coding and action aggregation (SCAA) design for the DNN used in the DDQN. Specifically, we devise a dropout mechanism in the first two layers of the DNN to code users' requests instead of directly using them as input states. A two-layer architecture as the output layer of the DNN dynamically aggregates task software caching action to output the corresponding state-action value. This design effectively reduces the complexity of the DDQN, leading to faster convergence than traditional DDQN algorithms.

   \item We conduct simulations to evaluate the performance of our proposed dynamic TSCU assisted COMO approach. The results show that the proposed approach significantly reduces the users' computation energy consumption. It outperforms the conventional caching update-based COMO approaches. Moreover, the proposed scheme is capable to converge faster than other reinforcement learning-based caching update approaches.
\end{itemize}

\subsection{Organization}
The remaining parts of this paper are organized as follows. In Section \ref{sec:system_model}, we illustrate the system model and formulate the joint TSCU and COMO problem. In Section \ref{sec:Algorithm}, we propose an efficient scheme to solve the original problem. Section \ref{sec:simu} verifies the effectiveness of the proposed scheme by simulations. The conclusion is drawn in Section \ref{sec:conclu}. The code and dataset are available at https://github.com/chfocus/DRL-MEC.

\begin{table}[ht]\tiny\label{tab:notation}
\vspace{-0.6cm}
\small
\caption{Notation Summary}
\vspace{-0.3cm}
\begin{tabular}{|p{1.5cm}|p{6cm}|p{1.5cm}|p{6cm}|}
\hline
Notation & Definition & Notation & Definition \\
\hline
$K$; $F$; $M$  & Number of users; number of tasks; number of subchannels &
$\mathcal{K}$; $\mathcal{F}$; $\mathcal{M}$  & User set; task set; subchannel set\\
\hline
$f_k^{\text{L}}$; $p_k$ & User $k$'s CPU capability; user $k$'s transmit power &
$C$; $f_{\text{C}}$ & MEC server's cache size; MEC server's CPU capability\\
\hline
$B$ & Wireless transmission bandwidth &
$I_f$; $D_f$; $S_f$ & Input parameters' size of task $f$; data size of the task $f$'s software; computation load of task $f$\\
\hline
$\mu_k^{(t)}$ & User $k$'s task request in slot $t$ &
$\alpha_{k,t}$ & User $k$'s COMO decision in slot $t$\\
\hline
$b_f^{(t)}$ & The caching state of the task $f$ in slot $t$ &
$\beta_f^{(t)}$ & Caching update decision of task $f$\\
\hline
$r_{k,t}$ & The uplink transmission rate of user $k$ in slot $t$ &
$\Upsilon_{k,t}$ & Received interference of user $k$ in slot $t$\\
\hline
\end{tabular}
\vspace{-0.8cm}
\end{table}
\section{System Model}\label{sec:system_model}

\subsection{Network Model}
In this paper, we focus on a multi-user MEC network consisting of a BS and $ K $ users as shown in Fig. \ref{fig:sys_model}(a), where the BS is equipped with an MEC server that can access the task library in the cloud centre through an ideal backhaul link. The main notations used throughout this paper are summarized in Table I.
Let $\mathcal{K} = \left\{ {1,2, \cdots ,K} \right\}$ represents the user index set. It is assumed that there are total $ F $ tasks in the task library, whose index set is denoted by $\mathcal{F} = \left\{ {1,2, \cdots ,F} \right\}$.
We consider that the system operates in a sequence of $T$ time slots with an equal length $\tau$. The index set of the time sequence is denoted by $\mathcal{T}=\{1,2,\cdots,T\}$.
The operation mechanism of the system is shown in Fig. \ref{fig:sys_model}(b).
At the beginning of each time slot, each user requests to execute one task in the task library or does not request to execute any task.
Similar to \cite{8491367, 9419788}, we assume that each task must be accomplished before the end of the current slot, either by its local computing or by the MEC server execution. Note that this assumption can be removed by setting delay constraints for each user individually and letting the time slot length be long enough to exceed the maximum delay constraint of users.
Moreover, users' tasks requiring multiple slots to execute are usually inactive in practical system design because this can usually be satisfied by modifying the time slot length.
At the end of this time slot, the MEC server first updates its caching space, and then it caches the selected new task software to assist users' COMO in the next time slot.
After obtaining the task software, the edge server installs the software (e.g., executable .EXE files), and run it based on different input parameters.

\begin{figure}[ht!]
  \centering
  \vspace{-0.6cm}
  \setlength{\abovecaptionskip}{0cm}
  \subfigure[]{\includegraphics[width=0.47\textwidth]{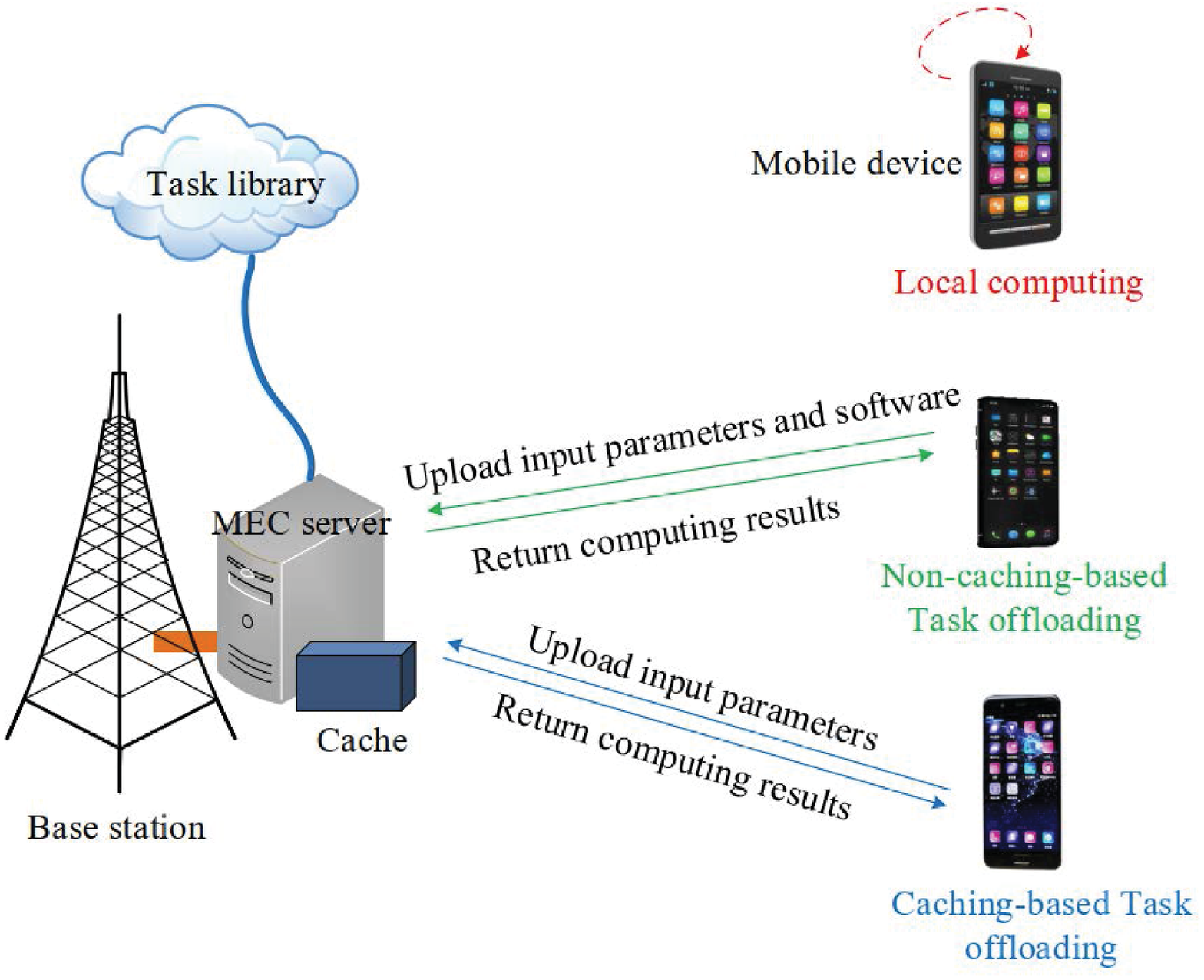}}
  \subfigure[]{\includegraphics[width=0.47\textwidth]{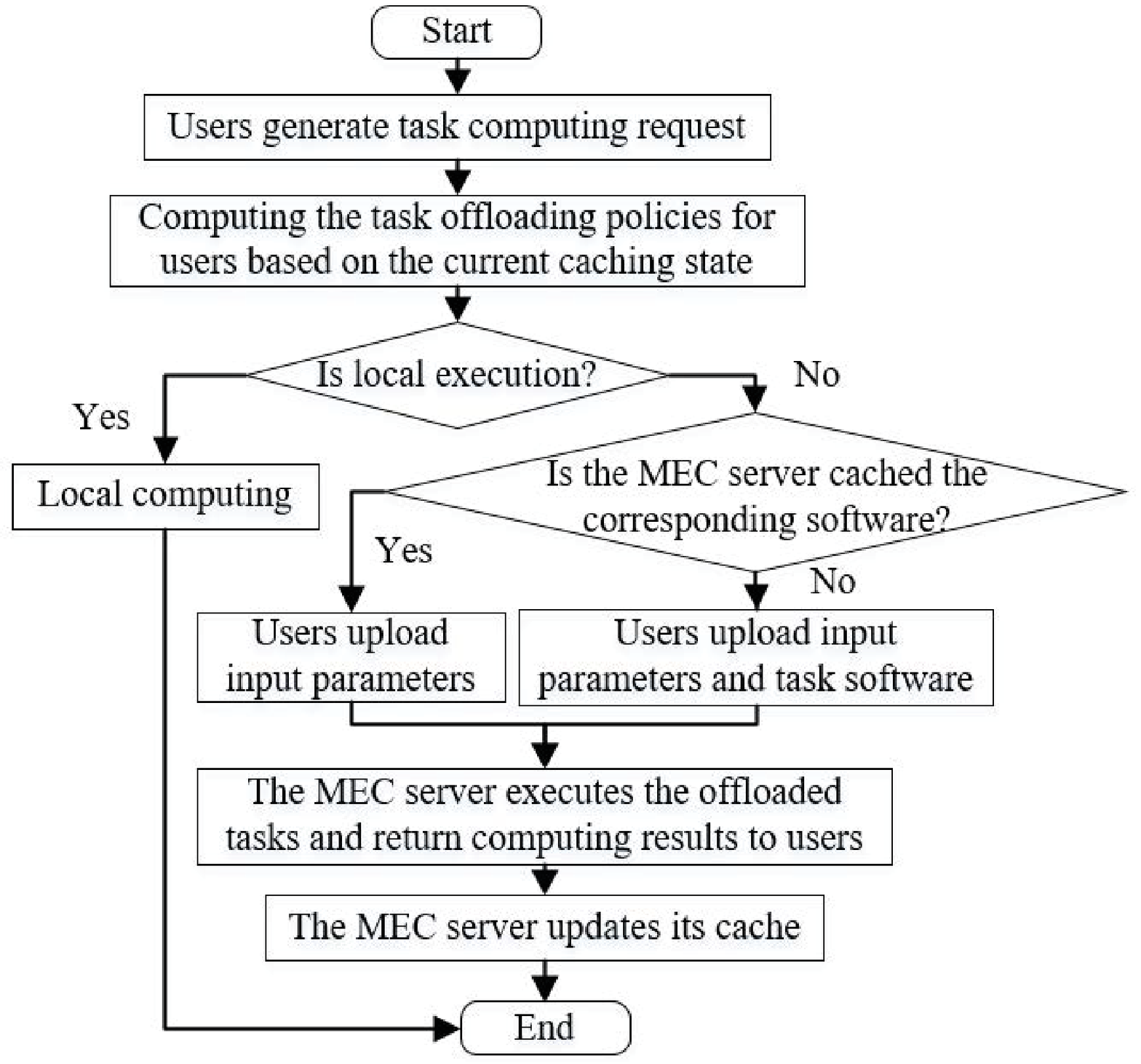}}
  \caption{Illustrating the studied system model: (a) shows the network structure, where one base station is equipped with an MEC server is able to proactively cache selected task software and mobile device has three methods to execute their tasks; and (b) offers the flow chart of the operation mechanism in one time slot.}
  \label{fig:sys_model}
  \vspace{-0.6cm}
\end{figure}

Each task $f \in \mathcal{F}$ can be described by a tuple of three parameters, i.e., $\left\langle {{I_f},{D_f},{S_f}} \right\rangle$, where $I_f$ indicates the size of input parameters of task $f$, $D_f$ is the data volume of the software of task $f$, and $S_f$ denotes the computation load of task $f$, i.e., the necessary central processing unit (CPU) cycles for executing task $f$.
Let $b_f^{(t)} \in \{0, 1\}$ denote the caching state of task $f$ in time slot $t$, where $b_f^{(t)}=1$ represents that the software of task $f$ is cached at the MEC server, $b_f^{(t)}=0$ otherwise.
The caching state in time slot $t$ is characterized by $\bm{b}_t = \{b_1^{(t)},b_2^{(t)},\cdots, b_F^{(t)} \}$. The cache size of the MEC server is denoted by $C$.
Knowing that the cache size is limited, the caching state in any time slot should satisfy
\begin{equation}
\setlength{\abovedisplayskip}{3pt}
\setlength{\belowdisplayskip}{3pt}
\sum\nolimits_{f \in \mathcal{F}} b_f^{(t)} D_f  \le C, \forall t \in \mathcal{T}.
\end{equation}
The TSCU decision profile in time slot $t$ is $\bm{\beta}_t = \{\beta_1^{(t)},\beta_2^{(t)},\cdots, \beta_F^{(t)} \}$.
Let $\beta_f^{(t)} \in \left\{ -1,0,1 \right\}$ indicates the caching update decision for task $f$ in the slot $t$, where $\beta_f^{(t)}=-1$ indicates that the software of task $f$ will be removed at the end of time slot $t$, $\beta_f^{(t)}=0$ denotes that the caching state of $f$ will remain unchanged, and $\beta_f^{(t)}=1$ represents that the software of task $f$ will be added to the cache space in the slot ($t+1$).
Thus, the caching state of task $f$ at the ($t+1$)-th time slot is $b_f^{(t+1)} = b_f^{(t)}+ \beta_f^{(t)}$.
It is noted that $\beta_f^{(t)}$ should satisfy $\beta_f^{(t)} \ge - b_f^{(t)}$ because the MEC server cannot remove uncached task software.

We denote the users' request in time slot $t$ as $\bm{\mu}_t = \{\mu_1^{(t)},\mu_2^{(t)},\cdots,\mu_K^{(t)} \}$. At time slot $t$, let $\mu_k^{(t)} \in \overline{\mathcal{F}}$ ($\overline{\mathcal{F}} = \left\{ 0 \right\} \cup \mathcal{F}$) denote the task request state of user $k$, where $\mu_k^{(t)} = 0$ represents that user $k$ requests nothing, and $\mu_k^{(t)} = f$ ($f \in \mathcal{F}$) indicates that user $k$ requests to execute the task $f$. We assume that $\mu_k^{(t)}$ ($\forall k \in \mathcal{K}$) evolves according to a first-order ($F + 1$)-state Markov chain \cite{8418400} whose transition probability is unknown. That is to say, the users' request in time slot $(t+1)$ is only affected by the users' request in slot $t$  and there are ($F+1$) possible options.

\subsection{Communication Model}
It is assumed that the total available bandwidth in the network is $B$ Hz, which is equally divided into $M$ orthogonal wireless channels. The set of channels is denoted as $\mathcal{M} = \left\{ 1,2, \cdots ,M \right\}$. In each time slot, each user can only use one channel to communicate with the BS. Such a communication method is able to ensure that two users using orthogonal channels do not interfere with each other.
We use $\alpha_{k, t} $ to denote the COMO decision of user $k$ at the $t$-th time slot, where $\alpha_{k, t}=0$ indicates that user $k$ accomplishes its task by its own computing. The $\alpha_{k, t}=m$ ($m \in \mathcal {M}$) denotes that user $k$ selects channel $m$ to offload its task to the MEC server for computing.
We denote the COMO decision of all users in time slot $t$ as $\bm{\alpha}_t = \{\alpha_{1, t},\alpha_{2, t},\cdots, \alpha_{K, t} \}$.
Let $h_k$ and $p_k$ denote the channel gain and transmit power of user $k$, respectively.
In this work, we investigate the task offloading problem under a wireless interference model, in which code division multiple access is deployed to enable multiple users to occupy the same spectrum resource simultaneously for transmitting the information. Thus, the achievable uplink transmission rate of user $k$ in slot $t$ is \cite{rappaport1996wireless, 7307234}
\begin{equation}\label{eq:link_rate}
\setlength{\abovedisplayskip}{3pt}
\setlength{\belowdisplayskip}{3pt}
r_{k,t} = \frac{B}{M} \log \bigg{(} 1 + \frac{p_k h_k}{{\sum\limits_{n \in \mathcal{K} \backslash \{ k\}, \alpha_{n,t} = \alpha_{k,t}} {p_n h_n}  + \sigma^2}} \bigg{)},
\end{equation}
where $\sigma^2$ is the variance of complex white Gaussian channel noise.
In fact, \eqref{eq:link_rate} characterizes the minimal transmit rate of user $k$. The effective interference of user $k$ induced by other users is less than $\sum\nolimits_{n \in \mathcal{K} \backslash \{ k\}, \alpha_{n,t} = \alpha_{k,t}} {p_n h_n}$ and determined by the power control and code design \cite{1194818,chiang2008power}.
Due to the space limits, we investigate the computation offloading problem based the minimal achievable transmit rate in \eqref{eq:link_rate}, and do not consider the power control and code design. Note that, our algorithms designed in the following is able to directly used in the effective channel interference situations.
Moreover, the joint channel code design, power control and computation offloading problem to further improve the offloading performance and manage interference will be a future direction for our work.

From \eqref{eq:link_rate}, users may incur severe interference and low transmission rate when a large number of users offloading theirs tasks through the same channel. As we discuss latter, this would increase the energy consumption for users and forcing part of them to execute tasks by local computing, and thus the number of users in the same channel would be limited.

\subsection{Task Computing}
In our model, we introduce the task software caching mechanism to assist COMO. The MEC server proactively caches the selected task software from the task library and provides computing service for users in the next slot. At the beginning of each time slot, users send their task requests to the MEC server, and then the MEC server returns whether their request tasks are cached. Based on this, when user $k$ needs to execute task $f$, it is able to accomplish $f$ through local computing or caching-based task offloading if $f$ is cached, otherwise through local computing or non-caching-based task offloading. Similar to \cite{8491367, 9419788}, we ignore the information exchange overhead of users acquire whether their task software is cached at the MEC server because it is far small than the input parameters or task software uploading cost.
In the following, we elaborate these three methods:
\begin{enumerate}[label={\arabic*)}]
\item \textbf{Local Computing:} When user $k$ execute its requested task via the local CPU, we denote the computing capability (i.e., CPU cycles per second) of user $k$ ($k \in \mathcal{K}$) as $f_k^{\text{L}}$. Employing the dynamic voltage and frequency scaling technique \cite{mao2017survey}, user $k$ can control the energy consumption for local computing by adjusting the CPU frequency. Considering that user $k$ must finish the local task computing within the current time slot, the CPU frequency of user $k$ satisfies $f_k^{\text{L}} \ge {S_f}/{\tau}$. Based on the realistic measurement result in \cite{miettinen2010energy}, the energy consumption is proportional to the square of the frequency of mobile device. Thus, the energy consumption of user $k$ executes task $f$ by its own device is
    \begin{equation}
    \setlength{\abovedisplayskip}{3pt}
    \setlength{\belowdisplayskip}{3pt}
    E_{k,f}^{\text{L}} = \zeta {(f_k^{\text{L}})^2}{S_f} \ge \zeta \frac{{S_f^3}}{{{\tau ^2}}},
    \end{equation}
    where $\zeta$ is the energy coefficient of mobile devices, determined by the chip architecture. Without loss of the generality, we set the CPU frequency as $f_k^{\text{L}} = S_f/{\tau}$, as this is the most energy-efficient CPU frequency under the deadline constraint. Consequently, The energy consumption of user $k$ executes task $f$ by its own device is $E_{k,f}^{\text{L}} = \zeta \frac{{S_f^3}}{{{\tau ^2}}}$.
\item \textbf{Non-caching-based Task Offloading:} In each time slot $t$, if user $k$ offloads task $f$ to the MEC server for computing, and the MEC server did not cache the corresponding software of task $f$, it needs to upload the input parameters and the corresponding software of task $f$ to the MEC server. In fact, this non-caching-based method is the pure task offloading as illustrated in many existing works, e.g., \cite{sun2020online, alameddine2019dynamic, yu2020joint, zhang2020dynamic, zhao2021energy}.
    Note that, as stated in \cite{9340334},  the MEC server is also able to download the task software from the library each time the request is made by the user $k$, while it only uploads the input parameters. However, the task software acquiring process is time-consuming, especially during peak time. Thus, similar to many existing works, e.g., \cite{wen2020joint, yan2021pricing, 9509427}, we do not allow the edge server to fetch remotely from the library every time the task software is required.
    Let $f_{\text{C}}$ ($f_{\text{C}} \gg f_k^{\text{L}}, \forall k \in \mathcal{K}$) denote the computing capability of the MEC server.
    The task execution delay can be expressed as
    \begin{equation}\label{eq:TKO}
    \setlength{\abovedisplayskip}{3pt}
    \setlength{\belowdisplayskip}{3pt}
    T_{k,f,t}^{\text{O}} = \frac{S_f}{f_{\text{C}}} + \frac{I_f + D_f}{r_{k,t}},
    \end{equation}
    where $r_{k,t}$ follows (\ref{eq:link_rate}). The first part in the right hand side (RHS) of Eq. (\ref{eq:TKO}) is the task execution delay at the MEC server, the second part in the RHS of Eq. (\ref{eq:TKO}) represents the data transmission delay. Considering that the task must be accomplished in the current time slot, the delay should satisfy $T_{k,f,t}^{\text{O}} \le \tau$. The corresponding energy consumption of user $k$ for executing task $f$ is
    \vspace{-0.3cm}
    \begin{equation}\label{eq:non_caching_EC}
    \setlength{\abovedisplayskip}{3pt}
    \setlength{\belowdisplayskip}{3pt}
    E_{k,f,t}^{\text{O}} = p_k \frac{I_f + D_f}{r_{k,t}},
    \end{equation}
    where $r_{k,t}$ is given in (\ref{eq:link_rate}). Note that the energy consumption in (\ref{eq:non_caching_EC}) includes the transmit energy consumption of both input parameters and the corresponding software.
\item \textbf{Caching-based Task Offloading:} When user $k$ offloads the task $f$ to the MEC server for executing in slot $t$, and the MEC server already cached the software of task $f$, it only needs to upload the input parameters and request the MEC server to compute the task $f$ directly and does not need to upload the corresponding software data. Thus, the execution delay can be expressed as
    \vspace{-0.3cm}
    \begin{equation}
    \setlength{\abovedisplayskip}{3pt}
    \setlength{\belowdisplayskip}{3pt}
    T_{k,f,t}^{\text{C}} = \frac{S_f}{f_{\text{C}}} + \frac{I_f}{r_{k,t}}.
    \end{equation}
    Similar to the non-caching-based task offloading method, the execution delay of caching-based task offloading also should satisfy $T_{k,f,t}^{\text{C}} \le \tau$. In addition, the corresponding energy consumption is
    \vspace{-0.3cm}
    \begin{equation}
    \setlength{\abovedisplayskip}{1pt}
    \setlength{\belowdisplayskip}{1pt}
    E_{k,f,t}^{\text{C}} = p_k \frac{I_f}{r_{k,t}},
    \end{equation}
    where $E_{k,f,t}^{\text{C}}$ only includes the transmit energy consumption of the input parameters. Thus, this caching-based task offloading method has lower computational costs (both execution delay and energy consumption) than the non-caching-based task offloading method. Consequently, when user $k$ offloads task $f$ to the MEC server for computing and the software of task $f$ is already cached at the MEC server, there is no doubt that the users will select the caching-based task offloading method for the task execution.
\end{enumerate}
\subsection{Problem Formulation}
In this paper, we aim to minimize the average task execution energy consumption of all users over each time slot under the constraint of task execution delay through jointly optimizing the COMO decision and TSCU policy.
Based on the above models and analysis, we formulate the energy consumption of user $k$ at the $t$-th time slot as
\begin{equation}\label{eq:energy_kt}
\setlength{\abovedisplayskip}{3pt}
\setlength{\belowdisplayskip}{3pt}
E_{k,t} = \sum\limits_{f \in \mathcal{F}} \mathbbm{1}(\mu_k^{(t)} = f) \bigg\{\bigg. \mathbbm{1}(\alpha_{k,t} = 0)E_{k,f}^{\text{L}}
 + \mathbbm{1}(\alpha_{k,t} \in \mathcal{M})\left( {(1 - b_f^{(t)})E_{k,f,t}^{\text{O}} + b_f^{(t)}E_{k,f,t}^{\text{C}}} \right) \bigg\}\bigg.,
\end{equation}
where $\mathbbm{1}(\cdot)$ is an indicator function, which is one if and only if the condition in the parentheses is proper, otherwise it is zero.
Eq. (\ref{eq:energy_kt}) corresponds to three cases: (i) when user $k$ executes the task $f$ through its own device (i.e., $\alpha_{k,t} = 0$), its energy consumption is local computing energy consumption, i.e., $E_{k,t}= E_{k,f}^{\text{L}}$; (ii) when user $k$ executes the task $f$ through COMO and the software has not cached at the MEC server (i.e., $\alpha_{k,t} \in \mathcal{M}$ and $b_f^{(t)}=0$), its energy consumption is $E_{k,t}= E_{k,f,t}^{\text{O}}$ which consists of the transmission energy consumption of input parameters and software; (iii) when user $k$ executes the task $f$ through COMO and the software has already cached at the MEC server (i.e., $\alpha_{k,t} \in \mathcal{M}$ and $b_f^{(t)}=1$), its energy consumption is $E_{k,t}= E_{k,f,t}^{\text{C}}$ which only includes transmission energy consumption of input parameters.
Note that we assume that users will select the caching-based task offloading instead of the non-caching-based task offloading when the corresponding task software has already been cached at the MEC server because the caching-based task offloading method consumes lower energy.
Thus, we can formulate the problem as
\vspace{-0.3cm}
\begin{align}
\mathcal{P}:~~\min_{\bm{\alpha}_t,\bm{\beta}_t}~~&\mathop {\lim}\limits_{T \to \infty } \frac{1}{T}\sum\nolimits_{t = 1}^T \sum\nolimits_{k \in \mathcal{K}} {E_{k,t}}\label{prob:one}\\
\text{s.~t.~~}& \sum\nolimits_{f \in \mathcal{F}} ( b_f^{(t)} +\beta_f^{(t)} ){D_f} \le C, \forall t \in \mathcal{T}, \label{cons:one1}\tag{\theequation a}\\
&\mathbbm{1}(\alpha_{k,t} \in \mathcal{M}) \Big{(} {b_f^{(t)}T_{k,f,t}^{\text{C}}+(1 - b_f^{(t)})T_{k,f,t}^{\text{O}}} \Big{)} \le \tau, \forall k \in \mathcal{K}, \forall f \in \mathcal{F}, \forall t \in \mathcal{T},\label{cons:one2}\tag{\theequation b}\\
&b_f^{(t+1)} = b_f^{(t)}+ \beta_f^{(t)}, \forall f \in \mathcal{F}, \forall t \in \mathcal{T},\label{cons:one3}\tag{\theequation c}\\
&\beta_f^{(t)} \ge - b_f^{(t)}, \forall f \in \mathcal{F}, \forall t \in \mathcal{T}, \label{cons:one4}\tag{\theequation d}\\
&\alpha_{k,t} \in \left\{ {0,1,...,M} \right\}, \forall k \in \mathcal{K},\forall t \in \mathcal{T},\label{cons:one5}\tag{\theequation e}\\
&\beta_f^{(t)} \in \left\{ {-1,0,1} \right\},\forall f \in \mathcal{F}, \forall t \in \mathcal{T}.\label{cons:one6}\tag{\theequation f}
\end{align}
In problem $\mathcal{P}$, (\ref{cons:one1}) implies the cache size constraint of the MEC server.
(\ref{cons:one2}) corresponds to the users' task execution delay restriction.
(\ref{cons:one3}) reveals the TSCU regulations.
(\ref{cons:one4}) indicates that the MEC server cannot remove the uncached task software.
(\ref{cons:one5}) represents the available task computing methods, where $\alpha_{k,t}=0$ indicate that user $k$ executes its task through local computing, and $\alpha_{k,t}=m$ ($ m \in \mathcal{M}$) represents that user $k$ offloads its task (caching-based offloading if $b_f^{(t)}=1$ and non-caching-based offloading if $b_f^{(t)}=0$) through channel $m$.
(\ref{cons:one6}) imposes restrictions on the TSCU decision.
Problem $\mathcal{P}$ is intractable to directly solve since it involves interactive COMO and task software caching across different time slots and lacks user request transition probabilities. We prove it is NP-hard in Lemma \ref{lem:NPhard}.
\begin{lem}\label{lem:NPhard}
Problem $\mathcal{P}$ that involves interactive COMO and TSCU across different time slots is NP-hard.
\end{lem}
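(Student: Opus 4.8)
The plan is to establish NP-hardness by a polynomial-time reduction from the \emph{maximum cardinality bin packing} (MCBP) problem, which is known to be NP-hard. Recall that an MCBP instance consists of $M$ identical bins of capacity $\theta$ and $K$ items of sizes $w_1,\dots,w_K$, and asks for the maximum number of items that can be packed into the bins without exceeding any bin's capacity. Since proving that a \emph{special case} of $\mathcal{P}$ is NP-hard suffices, I would first freeze the temporal dimension: take a single time slot (e.g.\ $T=1$, or equivalently a stationary request pattern) in which every user requests a distinct task and the cache is fixed, so that the TSCU variables $\bm\beta_t$ play no role and only the COMO assignment $\bm\alpha_t$ remains to be optimised. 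The general problem contains this instance as a subproblem, so hardness of the special case implies hardness of $\mathcal{P}$.

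Then I would construct the reduction as follows. Identify each MCBP item $k$ with a user $k$ and each bin with one of the $M$ channels, so that assigning item $k$ to bin $m$ corresponds to setting $\alpha_{k,t}=m$ (offloading on channel $m$), while leaving item $k$ unpacked corresponds to $\alpha_{k,t}=0$ (local computing). Choose the channel gains and powers so that the interference weight $p_kh_k$ equals the item size $w_k$. The crux is to turn the delay constraint \eqref{cons:one2} into the additive bin-packing capacity $\sum_{k:\alpha_{k,t}=m} w_k \le \theta$. To this end, note that $T_{k,f,t}^{\text{O}}\le\tau$ is equivalent to a minimum-rate, hence a maximum-interference, requirement $\Upsilon_{k,t}=\sum_{n\ne k,\,\alpha_{n,t}=\alpha_{k,t}} p_nh_n \le \theta_k$ for a threshold $\theta_k$ determined by $I_f,D_f,S_f,f_{\text{C}},\tau,\sigma^2$ and $p_kh_k$. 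By selecting each user's task parameters so that $\theta_k=\theta-w_k$ (one free equation per user, solvable because the requested tasks are distinct and their parameters are ours to set), the per-user constraints on a common channel collapse to the single condition $\sum_{k:\alpha_{k,t}=m} w_k \le \theta$, i.e.\ exactly the bin capacity. Finally, I would make local computing uniformly prohibitive by choosing $S_f$ (together with a large server speed $f_{\text{C}}$ to keep $S_f/f_{\text{C}}$ within the deadline) so that the local energy $E_{k,f}^{\text{L}}=\zeta S_f^3/\tau^2$ of a single user exceeds the total offloading energy of all $K$ users combined.

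With this last choice the objective becomes lexicographic: any feasible profile that offloads more users has strictly smaller total energy $\sum_k E_{k,t}$ than any profile offloading fewer, so the minimiser of \eqref{prob:one} first maximises the number of successfully offloaded users and only then optimises among them. Consequently the optimal energy drops below a prescribed target $E^\star$ (the value attained when exactly $N$ users offload) if and only if at least $N$ users can be simultaneously assigned to channels without violating \eqref{cons:one2}, which by construction holds if and only if the MCBP instance admits a packing of at least $N$ items. Thus a polynomial-time solver for (the decision version of) $\mathcal{P}$ would solve MCBP, and since the reduction is clearly polynomial, $\mathcal{P}$ is NP-hard.

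I expect the main obstacle to be the middle step: the delay/rate constraint \eqref{cons:one2} is nonlinear in the SINR and mutually coupled across all users sharing a channel, whereas bin packing needs a clean \emph{additive} per-bin capacity. The key device is the calibration $\theta_k=\theta-w_k$, which absorbs each user's own weight and reduces the ``sum over the others'' constraint of every co-channel user to a single sum-over-all bound; verifying that this calibration is simultaneously realisable for all users with valid (positive, deadline-feasible) parameters, and that the resulting instance is still a legitimate input to $\mathcal{P}$, is the part that needs care. The objective-reduction step (dominance of local energy) and the two-directional equivalence are then comparatively routine.
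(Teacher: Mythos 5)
Your proposal is correct and follows essentially the same route as the paper: both restrict to a single time slot with a fixed cache, reduce from the maximum cardinality bin packing problem with users as items, channels as bins, item sizes $p_k h_k$, and convert the per-user ``interference over the others'' threshold into an additive per-channel capacity by adding the user's own weight to both sides. Your explicit calibration $\theta_k=\theta-w_k$ (making the capacity genuinely uniform across users) and the energy-dominance argument replacing the paper's device of setting every offloading cost to $-1$ are slightly more careful renderings of the same construction, not a different proof.
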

\begin{proof}
See Appendix \ref{App:A1}.
\end{proof}

\section{Proposed Computation Offloading and Task Software Caching Update Algorithm}\label{sec:Algorithm}
Due to the intractability of the problem $\mathcal{P}$, one cannot find an effective algorithm to achieve the optimal solution in polynomial time.
In fact, the difficulty of solving problem $\mathcal{P}$ is mainly from the interactive COMO and task software caching across different time slots, as well as the lack of user request transition probabilities.
To cope with these challenges, we decompose the original problem into two subproblems, i.e., the COMO problem and the TSCU problem. First, for any given task software caching state, we reformulate the COMO problem as a multi-user COMO game and then we propose a decentralized algorithm to address its NE solution.
After that, we reformulate the TSCU problem as an Markov decision process (MDP) and use a DDQN to learn the optimal TSCU policy.
\subsection{Multi-user Computation Offloading Algorithm}
Based on the formulation of problem $\mathcal{P}$, the task offloading decision in any time slot $t$ (i.e., $\bm{\alpha}_t$) only affects the energy consumption in $t$, i.e., $E_{k,t}$, and does not related with other slots. In addition, $\bm{\alpha}_t$ does not affect the task software caching decisions in any time slot. Inspired by this, we focus on the COMO problem in a specific time slot $t$ under any given task software caching state $\bm{b}_t$, and design an efficient algorithm to achieve the COMO decision. It is valuable to note that this algorithm can be generalized to solve COMO decisions in any other time slot. We decompose the task offloading problem in slot $t$ from problem $\mathcal{P}$ as:
\begingroup
\setlength{\abovedisplayskip}{0pt}
\setlength{\belowdisplayskip}{2pt}
\begin{align}
\mathcal{P}_1:~~\min_{\bm{\alpha}_t}~~& f_t\left(\bm{\alpha}_t \right) = \sum\nolimits_{k \in \mathcal{K}}  {E_{k,t}} \label{prob:Pro_off}\\
\text{s.~t.~~}&(\text{\ref{cons:one2}}), (\text{\ref{cons:one5}})\notag.
\end{align}
\endgroup
Note that $\bm{\alpha}_t = \{\alpha_{1, t},\alpha_{2, t},\cdots, \alpha_{K, t} \}$, where $\alpha_{k, t}$ ($k \in \mathcal{K}$) has $(M+1)$ value selections. Therefore, the problem $\mathcal{P}_1$ is difficult to solve because it involves a combinatorial optimization over the multi-dimensional discrete space $\{0,1,\cdots, M\}^K$. In the following, we transfer it to a potential game and solve its NE solution.

Let $\alpha_{- k,t} = \{ \alpha_{1,t}, \cdots ,\alpha_{k - 1,t},\alpha_{k + 1,t}, \cdots ,\alpha_{k,t} \}$ denote the task offloading decisions of all other users except from user $k$. The user $k$ is able to choose the optimal computation decision $\alpha_{k,t}^*$ under any given $\alpha_{- k,t}$ in polynomial time with complexity $\mathcal{O}(M+1)$, where $\alpha_{k,t}^* = \mathop {\arg \min }\limits_{\alpha_{k,t}} {f_t}\left( {\alpha_{k,t},\alpha_{- k,t}} \right)$.
Therefore, we transfer the problem $\mathcal{P}_1$ to a multi-user cooperative strategic game $\bm{G} = \left\langle \mathcal{K}, \{ \Lambda_{k,t} \}_{k \in \mathcal{K}}, f_t(\bm{\alpha}_t) \right\rangle$, in which the user set $\mathcal{K}$ is the game player set, $\Lambda_{k,t}$ is the strategy space of user $k$ in time slot $t$ which can be obtained by solving constraint (\text{\ref{cons:one2}}) and (\text{\ref{cons:one5}}), and ${f_t}\left(\bm{\alpha}_t \right)$ is the computing cost of user $k$ (all users have the same computing cost). The objective of game $\bm{G}$ is to achieve a NE solution $\bm{\alpha}_t^* = \left\{ \alpha_{1,t}^*, \cdots ,\alpha_{K,t}^* \right\}$. That is to say, for computation decision $\bm{\alpha}_t^*$ in slot $t$, no user has the ability to further decrease its computing cost through changing its decisions, i.e., ${f_t}( {\alpha_{k,t}^*,\alpha_{- k,t}^*} ) \le {f_t}( \alpha_{k,t},\alpha_{- k,t}^*), \forall k \in \mathcal{K}, \alpha_{k,t} \in \Lambda_{k,t}$.

For any user $k$ ($k \in \mathcal{K}$) in this game $\bm{G}$, it would accomplish its task through task offloading when its local computing cost is larger than task offloading cost, i.e., $E_{k,f}^{\text{L}} \ge (1-b_f^{(t)})E_{k,f,t}^{\text{O}} + b_f^{(t)}E_{k,f,t}^{\text{C}}$. By substituting (3), (5), and (7) into this inequation, we have $\varsigma \frac{S_f^3}{\tau^2} \ge {p_k}\frac{I_f + (1 - b_f^{(t)}){D_f}}{r_{k,t}}$. Let $\Upsilon_{k,t}$ denote the  interference of user $k$,  which satisfies the following inequality:
\begin{equation}\label{ieq:off_not}
\setlength{\abovedisplayskip}{3pt}
\setlength{\belowdisplayskip}{3pt}
\Upsilon_{k,t} = \sum\limits_{n \in \mathcal{K}\backslash \{k\}, \alpha_{n,t} = \alpha_{k,t}} {p_n h_n} \le \frac{p_k h_k}{{2^{\frac{p_k \tau^2 M (I_f + D_f - b_f^{(t)} D_f)} {B \zeta S_f^3}}} - 1} - \sigma^2.
\end{equation}
In other words, for a given task offloading strategy $\bm{\alpha}_t$, the user $k$ is able to decrease the system energy consumption when its received interference satisfies inequation (\ref{ieq:off_not}).
Therefore, if user $k$ received low interference, it decreases its computing cost through task offloading. Otherwise, it accomplishes its task through local computing.
Based on \cite{7307234}, the game $\bm{G}$ is a ordinal potential game by constructing the potential function as follows.
\begin{equation}\label{eq:potential_f}
\setlength{\abovedisplayskip}{3pt}
\setlength{\belowdisplayskip}{3pt}
\phi(\bm{\alpha}_t) = \frac{1}{2}\sum\nolimits_{k = 1}^K  \sum\nolimits_{n \ne k} { p_k h_k p_n h_n \mathbbm{1}(\alpha_{n,t} = \alpha_{k,t})} \mathbbm{1}(\alpha_{k,t} > 0)
+ \sum\limits_{k = 1}^K p_k h_k V_k \mathbbm{1}(\alpha_{k,t} = 0),
\end{equation}
where
\vspace{-0.3cm}
\begin{equation}
\setlength{\abovedisplayskip}{3pt}
\setlength{\belowdisplayskip}{3pt}
V_k = \frac{p_k h_k}{{2^{\frac{p_k \tau^2 M(I_f + D_f - b_f^{(t)} D_f)} {B \zeta S_f^3}}} - 1} - \sigma^2
\end{equation}
is the interference threshold of user $k$ defined according to (11). User $k$ would accomplish its task by task offloading when $\Upsilon_{k,t} \le V_k$, otherwise by local computing.
Note that the change in the potential function (\ref{eq:potential_f}) has the same sign (positive or negative) with the change in the $ f_t\left(\bm{\alpha}_t \right)$.
In Remark \ref{rem:NE_exist}, we prove that the game $\bm{G}$ with the potential function $\phi(\bm{\alpha}_t)$ is a ordinal potential game and it has a NE solution.

\begin{rem}\label{rem:NE_exist}
The COMO game $\bm{G}$ with the potential function $\phi(\bm{\alpha}_t)$ is a ordinal potential game and is able to achieve a NE solution in finite number of iterations.
\end{rem}
\begin{proof}
See Appendix \ref{App:A2}
\end{proof}

Based on Remark \ref{rem:NE_exist}, we develop a potential game-based multi-user COMO algorithm to address a mutually satisfactory offloading decisions (i.e., the NE solution) for all users. The detailed steps of COMO algorithm are summarized in \textbf{Algorithm 1}.
\begin{algorithm}
\algsetup{linenosize=} \small
\caption{Multi-user Computation Offloading}
\begin{spacing}{1.3}
\begin{algorithmic}[1]
\STATE Each user $k \in \mathcal{K}$ initialize its COMO decision $\alpha_{k,t} = 0$
\REPEAT
\FOR{Each user $k \in \mathcal{K}$:}
    \STATE Measure the interference $\Upsilon_{k,t}$ and calculate the transmission rate $r_{k,t}$,
    \STATE Compute the strategy space $\Lambda_{k,t}$ by solving constraint (\text{\ref{cons:one2}}) and (\text{\ref{cons:one5}}),
    \STATE Select the best offloading decision $\alpha_{k,t}^* = \mathop{\arg \min}\limits_{\alpha_{k,t} \in \Lambda_{k,t}} f_t\left( \alpha_{k,t},\alpha_{- k,t}\right)$
    \IF{$\alpha_{k,t}^* \ne {\alpha_{k,t}}$}
        \STATE Send a request message to BS for updating its offloading decision
        \IF{Received the update message}
            \STATE Update its COMO decision, i.e., ${\alpha_{k,t}} = \alpha_{k,t}^*$
        \ENDIF
    \ENDIF
\ENDFOR
\UNTIL Receive an end message
\RETURN $\bm{\alpha}_t$.
\end{algorithmic}
\end{spacing}
\end{algorithm}

Through \textbf{Algorithm 1}, we achieve a NE solution for the COMO problem. Firstly, we initialize the COMO decisions of all users to 0. Next, each user computes its available task offloading decision set $\Lambda_{k,t}$ based on constraints (\text{\ref{cons:one2}}) and (\text{\ref{cons:one5}}), and finds its optimal COMO decision $\alpha_{k,t}^*$. Then, user $k$ sends a update request message to the MEC server if $\alpha_{k,t}^* \ne \alpha_{k,t}$. When the MEC server receives the update request messages from users, it randomly selects one user and then sends the update permission message to this user. The user who receives the update permission message updates its offloading decision, and the users who do not receive the update permission message remain their offloading decisions. Finally, if the MEC server does not receive any update request message from users, it sends the end messages to all users. When users receive the end message, they offload their tasks based on their offloading decisions.
We analyze the convergence behaviour of \textbf{Algorithm 1} in Lemma \ref{lem:gameComplexity}.
\begin{lem}\label{lem:gameComplexity}
Game $\bm{G}$ can achieve a NE solution within $\frac{{\frac{1}{2}{K^2}\Delta_{\max}^2 + K(\Delta_{\max} V_{\max} - \Delta_{\min} V_{\min})}}{{\varepsilon \Delta_{\min}}}$ iterations, where $\varepsilon$ is a positive number.
\end{lem}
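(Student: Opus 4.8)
The plan is to combine the ordinal-potential property established in Remark \ref{rem:NE_exist} with an explicit bound on the range of the potential function and a uniform lower bound on the per-iteration decrease. First I would recall that, because $\bm{G}$ is an ordinal potential game, each iteration of \textbf{Algorithm 1} lets exactly one user (the one granted the update permission) switch to a strictly cost-reducing strategy, and by Remark \ref{rem:NE_exist} this strictly decreases the potential $\phi(\bm{\alpha}_t)$. Since the joint strategy space $\prod_{k} \Lambda_{k,t} \subseteq \{0,1,\dots,M\}^K$ is finite and $\phi$ strictly decreases along the trajectory, the finite-improvement property forces termination at an NE after finitely many steps; the remaining task is to count those steps.

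Writing $\Delta_k = p_k h_k$, $\Delta_{\max} = \max_k \Delta_k$, $\Delta_{\min} = \min_k \Delta_k$, and letting $V_{\max}, V_{\min}$ denote the extremes of the thresholds $V_k$, I would bound the two summands of $\phi$ separately. The interference term is nonnegative and is maximized when all users share one channel, so $\frac{1}{2}\sum_{k}\sum_{n\ne k}\Delta_k\Delta_n \mathbbm{1}(\alpha_{n,t}=\alpha_{k,t})\mathbbm{1}(\alpha_{k,t}>0) \le \frac{1}{2}K(K-1)\Delta_{\max}^2 \le \frac{1}{2}K^2\Delta_{\max}^2$, while the local-computing term obeys $K\Delta_{\min}V_{\min} \le \sum_{k}\Delta_k V_k \mathbbm{1}(\alpha_{k,t}=0) \le K\Delta_{\max}V_{\max}$. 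Combining these gives $\phi_{\max} - \phi_{\min} \le \frac{1}{2}K^2\Delta_{\max}^2 + K(\Delta_{\max}V_{\max} - \Delta_{\min}V_{\min})$, which is precisely the numerator of the claimed bound.

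Next I would lower-bound the decrease of $\phi$ caused by a single improving deviation of user $k$, splitting into the three admissible moves. Switching from channel $m$ to channel $m'$ changes $\phi$ by $\Delta_k(\Upsilon_{k,t}^{(m')} - \Upsilon_{k,t}^{(m)})$; switching from offloading on channel $m$ to local computing changes it by $\Delta_k(V_k - \Upsilon_{k,t}^{(m)})$; and switching from local computing to offloading on channel $m$ changes it by $\Delta_k(\Upsilon_{k,t}^{(m)} - V_k)$. In each case the magnitude of the (necessarily negative) change factors as $\Delta_k$ times a strictly positive interference/threshold gap. Defining $\varepsilon$ as the minimum such gap over all improving deviations, every iteration reduces $\phi$ by at least $\Delta_{\min}\varepsilon$, so the iteration count is at most $\frac{\phi_{\max}-\phi_{\min}}{\Delta_{\min}\varepsilon}$, which equals the stated $\frac{\frac{1}{2}K^2\Delta_{\max}^2 + K(\Delta_{\max}V_{\max} - \Delta_{\min}V_{\min})}{\varepsilon\Delta_{\min}}$.

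The main obstacle I anticipate is rigorously pinning down $\varepsilon$: one must argue that the improving gaps are bounded away from zero uniformly over all reachable configurations, rather than merely being positive at each individual step; since each gap is a difference of finite subset-sums of the finitely many quantities $\{p_n h_n\}$ and thresholds $\{V_k\}$, it takes finitely many values and hence admits a positive infimum, but this needs to be stated carefully. A secondary subtlety is the sign handling in the range bound, since the thresholds $V_k$ need not be positive; I would guard the estimate $\phi_{\min} \ge K\Delta_{\min}V_{\min}$ by bounding the local-computing term according to the worst case over the sign of each $\Delta_k V_k$, so that the claimed numerator remains a valid upper bound on the total range of $\phi$.
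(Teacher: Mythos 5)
Your proposal follows essentially the same route as the paper's own proof: bound the range of the potential function by $\frac{1}{2}K^2\Delta_{\max}^2 + K(\Delta_{\max}V_{\max} - \Delta_{\min}V_{\min})$, lower-bound the per-iteration decrease by $\varepsilon\Delta_{\min}$ via the same three-case analysis of improving deviations, and divide. Your two flagged subtleties --- that $\varepsilon$ must be a uniform positive lower bound over all reachable improving gaps (justified by finiteness of the strategy space) and that the estimate $\phi(0)\ge K\Delta_{\min}V_{\min}$ requires care when some $V_k$ is negative --- are in fact points the paper glosses over, so your treatment is, if anything, slightly more careful than the original.
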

\begin{proof}
See Appendix \ref{App:A3}
\end{proof}

\subsection{Deep Reinforcement Learning-based Task Software Caching Update Algorithm}
Up to now, we can find a mutually satisfactory COMO decision for all users (represented by $\bm{\alpha}_t^{*}$) under any given MEC server's caching state $\bm{b}_t$ and user request state $\bm{\mu}_t$ in any time slot. In other words, we can compute the corresponding energy consumption of any caching state $\bm{b}_t$ under any given user request state $\bm{\mu}_t$ since the COMO decision $\bm{\alpha}_t^*$ can be solved by using \textbf{Algorithm 1}. Substitute $\bm{\alpha}_t^{*}$ into the original problem $\mathcal{P}$, the original problem $\mathcal{P}$ can be transformed to the TSCU problem as
\begingroup
\setlength{\abovedisplayskip}{0pt}
\setlength{\belowdisplayskip}{2pt}
\begin{align}
\mathcal{P}_2:~~\min_{\bm{\beta}_t}~~&\mathop {\lim}\limits_{T \to \infty } \frac{1}{T}\sum\nolimits_{t = 1}^T \sum\nolimits_{k \in \mathcal{K}} {{\widehat E}_{k,t}} \label{prob:Pro_TCU}\\
\text{s.~t.~~}& (\text{\ref{cons:one1}}),(\text{\ref{cons:one3}}),(\text{\ref{cons:one4}}),(\text{\ref{cons:one6}}). \notag
\end{align}
\endgroup
\vspace{-0.3cm}
where
\begin{equation}
\setlength{\abovedisplayskip}{3pt}
\setlength{\belowdisplayskip}{3pt}
{\widehat E_{k,t}} = \sum\limits_{f \in \mathcal{F}} \mathbbm{1}(u_k^{(t)} = f)\bigg\{\bigg. \mathbbm{1}(\alpha_{k,t}^* = 0)E_{k,f}^{\text{L}}
 + \mathbbm{1}(\alpha_{k,t}^* \in \mathcal{M})\left( (1 - b_f^{(t)})E_{k,f,t}^{\text{O}} + b_f^{(t)} E_{k,f,t}^{\text{C}} \right) \bigg\}\bigg..
\end{equation}

Knowing that the TSCU decision $\bm{\beta}_t$ depends on the caching state $\bm{b}_t$, it is complex to directly solve $\bm{\beta}_t$. For ease of solving problem $\mathcal{P}_2$, we first solve the optimal caching state $\bm{b}_{t+1}$ in time slot $t+1$, then obtain the caching update decision $\bm{\beta}_{t}$ in slot $t$ based on $b_f^{(t)} + \beta_f^{(t)} = b_f^{(t + 1)}$.
The optimal caching state problem is formulated as
\begingroup
\setlength{\abovedisplayskip}{0pt}
\setlength{\belowdisplayskip}{2pt}
\begin{align}
\widehat{\mathcal{P}_2}:~~\min_{\bm{b}_{t+1}}~~& \mathop {\lim}\limits_{T \to \infty } \frac{1}{T}\sum\nolimits_{t = 1}^T \sum\nolimits_{k \in \mathcal{K}} {{\widehat E}_{k,t}} \label{prob:Pro_TCU1}\\
\text{s.~t.~~}& \sum\nolimits_{f \in \mathcal{F}} b_f^{(t + 1)}{D_f} \le C,\label{cons:TCU1}\tag{\theequation a}\\
&b_f^{(t + 1)} \in \left\{ {0,1} \right\}, \forall f \in \mathcal{F}.\label{cons:TCU2}\tag{\theequation b}
\end{align}
\endgroup
For any time slot ($t+1$), we can solve the optimal caching state $\bm{b}_{t+1}$ when the user request $\bm{\mu}_{t+1}$ is given, e.g., we can solve the energy consumption of all caching state $\bm{b}_{t+1}$ and find the minimum one. However, the caching state $\bm{b}_{t+1}$ is given by the MEC server updates caching space at the end of time slot $t$, and $\bm{\mu}_{t+1}$ is unknown at that time due to the unknown user request transition probabilities.
To tackle this challenge, we apply a DDQN to capture the features of the users' request model and predict the optimal task caching state of time slot ($t+1$) based on the system state of slot $t$. For the purpose of designing the DDQN algorithm, we reformulate problem $\widehat{\mathcal{P}_2}$ as an MDP and elaborate the state, action and reward in the below.
\begin{itemize}
\item State: the state in time slot $t$ is the user request state, i.e., ${S_t} = {\bm{\mu}_t} \in {(F + 1)^K}$.

\item Action: the action in time slot $t$ is the caching state in slot $(t+1)$, i.e., $A_t = \bm{b}_{t + 1} \in {\left\{{0,1} \right\}^F}$.

\item Reward: we define the reward in time slot $t$ as the saving value of energy consumption in time slot $(t+1)$, i.e., $R_{t+1}$. The saving value of energy consumption is defined as the difference between non-caching-based computing cost and caching-based computing cost, i.e., $R_{t+1} = E_{t+1}^{\text{NC}} - E_{t+1}^{\text{C}}$, where
\begin{equation}
\setlength{\abovedisplayskip}{3pt}
\setlength{\belowdisplayskip}{3pt}
E_{t+1}^{\text{NC}} = \sum\limits_{k \in \mathcal{K}}  \sum\limits_{f \in \mathcal{F}} \mathbbm{1}(u_k^{(t+1)} = f)\bigg\{\bigg. \mathbbm{1}(\alpha_{k,t+1}^{\text{NC}} = 0)E_{k,f}^{\text{L}}
+ \mathbbm{1}(\alpha_{k,t+1}^{\text{NC}} \in \mathcal{M})E_{k,f,t+1}^{\text{O}} \bigg\}\bigg.
\end{equation}
is the energy consumption when the MEC server's caching state is empty, i.e., $\bm{b}_{t+1}=[0]_F$,
\vspace{-1cm}
\begin{multline}
E_{t+1}^{\text{C}} \!=\! \sum\nolimits_{k \in \mathcal{K}} \sum\nolimits_{f \in \mathcal{F}} \mathbbm{1}(u_k^{(t+1)} \!=\! f)\bigg\{\bigg. \mathbbm{1}(\alpha_{k,t+1}^{\text{C}} \!=\! 0)E_{k,f}^{\text{L}}\\
 + \mathbbm{1}(\alpha_{k,t+1}^{\text{C}} \!\in\! \mathcal{M})\left( {(1 \!-\! b_f^{(t + 1)})E_{k,f,t+1}^{\text{O}} \!+\! b_f^{(t + 1)}E_{k,f,t+1}^{\text{C}}} \right) \bigg\}\bigg.
\end{multline}
is the energy consumption when the caching state is $\bm{b}_{t+1}$,
where $\alpha_{k,t+1}^{\text{NC}}$ is the COMO decision when the caching space is empty, and $\alpha_{k,t+1}^{\text{C}}$ corresponds to the COMO decision when the caching state is $\bm{b}_{t+1}$. Both $\alpha_{k,t+1}^{\text{NC}}$ and $\alpha_{k,t+1}^{\text{C}}$ can be solved by \textbf{Algorithm 1}.
\end{itemize}

The architecture of the applied DDQN is shown in Fig. \ref{fig:Agent}, which includes two DNNs with same structure: one is the main network, one is the target network.
The DDQN aims to learn the user request model and predict the optimal task software caching state in the next slot based on the user request in the current slot.
Instead of using a large Q table to list all possible states and actions, the applied DDQN in this paper uses a DNN to avoid listing all possible states and actions.
To overcome the high-dimension and complex caching action space resulting from massive tasks with heterogeneous data size and improve learning efficiency, we provide a new design of the DNN,  named state coding and action aggregation (SCAA).
SCAA adopts a dropout mechanism in the input layer to code users' states and a two-layer architecture at the output layer to aggregate caching actions dynamically. Fig. \ref{fig:DNN} shows the architecture of the proposed SCAA-DNN of the DDQN. In the following part, we introduce the SCAA-DNN in detail.
\begin{figure}
  \begin{minipage}[t]{0.5\linewidth}
    \centering
    \setlength{\abovecaptionskip}{0cm}
    \includegraphics[scale=0.28]{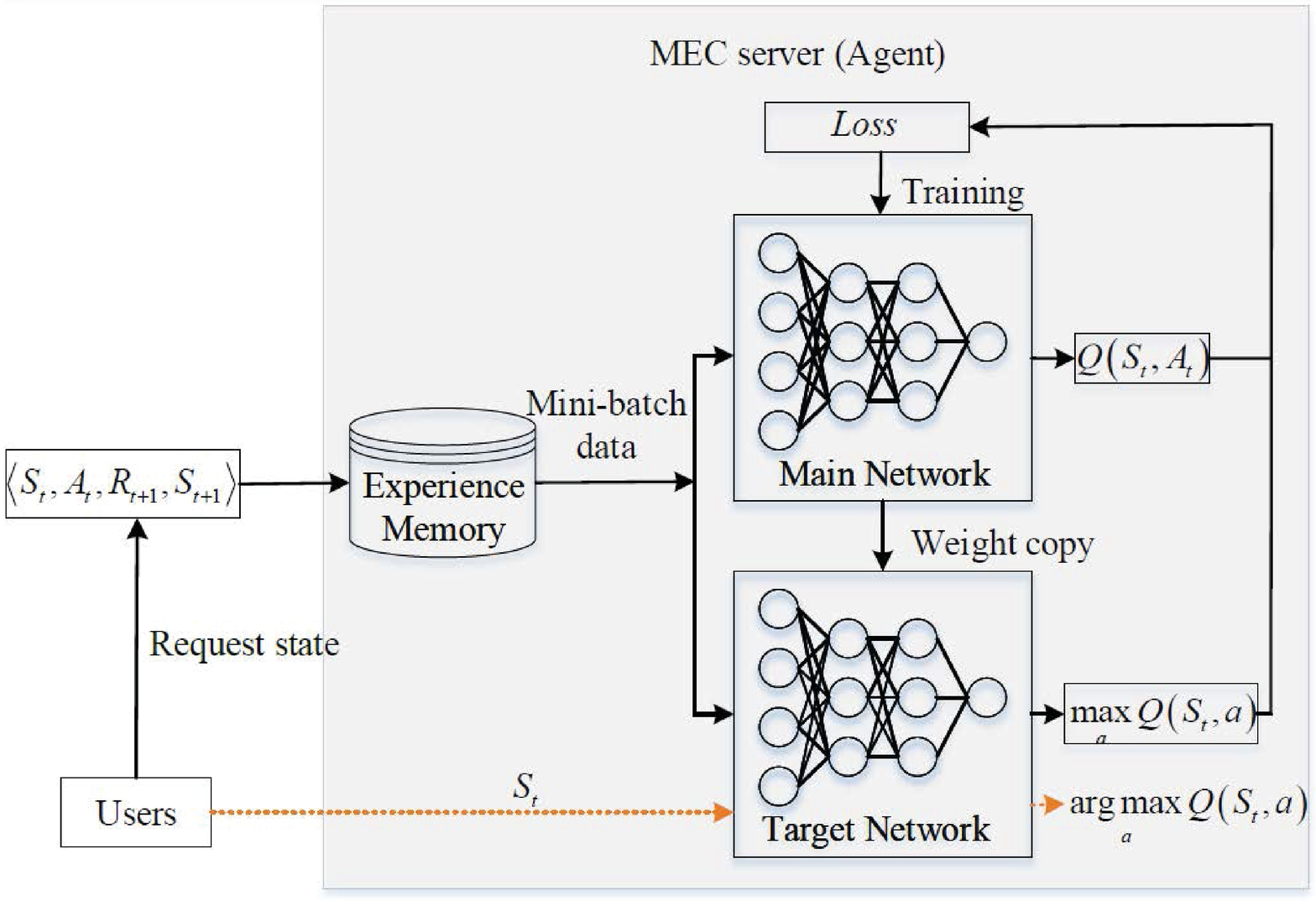}
    \caption{The DDQN training and inference process.}
    \label{fig:Agent}
  \end{minipage}
  \hspace{.15in}
  \begin{minipage}[t]{0.5\linewidth}
    \centering
    \setlength{\abovecaptionskip}{0cm}
    \includegraphics[scale=0.38]{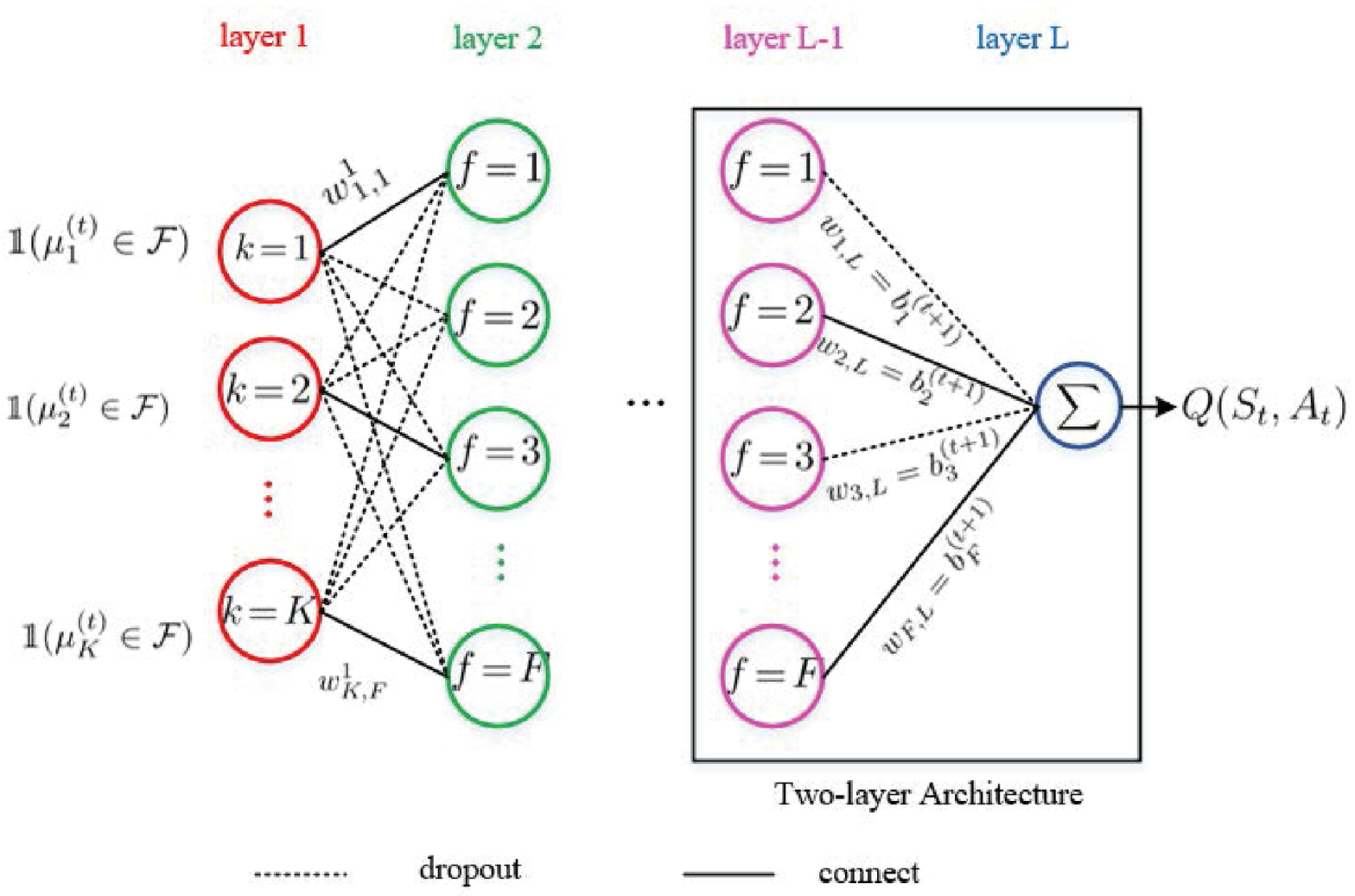}
    \caption{The architecture of proposed SCAA in the DNN of DDQN.}
    \label{fig:DNN}
  \end{minipage}%
  \vspace{-0.8cm}
\end{figure}

In the input of the SCAA-DNN, the users' task request is represented by the task order. For example, $\mu_k^{(t)}=f$ indicates that user $k$ request to execute the $f$-th task in time slot $t$.
The conventional design \cite{9110932} directly uses the state $S_t=\bm{\mu}_t$ as the input variables of the DNN, the tasks' order number will influence the output of the DNN (i.e., the state-action value $Q(S_t,A_t)$). In fact, the order number does not relate to the state-action value $Q(S_t,A_t)$.
In order to eliminate the influence of tasks' order, we use ${\bm{X}_t} = \{ \mathbbm{1}(\mu_k^{(t)} \in \mathcal{F}):k \in \mathcal{K} \}$ as the input of the DNN instead of the state $S_t$. The first layer of the DNN contains $K$ neural cells, and the input of the $k$-th cell is  $\mathbbm{1}(\mu_k^{(t)} \in \mathcal{F})$.
Hence, for clarifying the task demands of users, we define the second layer in the DNN contains $F$ neural cells, in which the $f$-th cell corresponds to the $f$-th task.
We use ${\bm{w}_1} = \{ {w_{k,f}^1: k \in \mathcal{K},f \in \mathcal{F}} \}$ to denote the weights of connections between the first layer and the second layer of the DNN, where $w_{k,f}^1$ denotes the weight of connection between the $k$-th neural cell in the first layer and the $f$-th neural cell in the second layer. The value of $w_{k,f}^1$ is defined as
\vspace{-0.4cm}
\begin{equation}\label{eq:weight1}
w_{k,f}^1 = \left\{ {\begin{array}{*{20}{c}}
{0,}&{\text{if}~~\mu_k^{(t)} \in \mathcal{F}~~\text{and}~~ \mu_k^{(t)} \ne f,}\\
{w_{k,f}^1,}&{\text{otherwise}.}
\end{array}} \right.
\end{equation}
If $\mu_k^{(t)} \in \mathcal{F}$, the connections between the $k$-th neural cell in the first layer and neural cells in the second layers except from the $\mu_k^{(t)}$-th neural cell will be dropout. In other words, the output of the $k$-th neural cell in the first layer only as the input of the $\mu_k^{(t)}$-th neural cell in the second layer.
If $\mu_k^{(t)} = 0$, all the connections between the $k$-th neural cell in the first layer and neural cells in the second layers will be remained, and $\mu_k^{(t)}$ does not affect the inputs of neural cells in the second layer. Such a design implements the user requests state coding in actuality.

In the conventional DDQN \cite{van2016deep}, the number of neural cells in the output layer of the DNN is equal to the number of all possible actions, in which each neural cell corresponds to one action and output the corresponding state-action value, i.e., $Q({S_t},{A_t})$.
However, for the caching problem $\widehat{\mathcal{P}_2}$, it is impractical due to the heterogeneous data size of task software and the large number of tasks. The large number of tasks will produce a large number of possible caching actions. Besides, it is difficult to list all the possible actions due to the heterogeneous size of task software.
For example, we assume that the MEC server can cache 10 task software with the same data size, and the task library has 50 tasks. The MEC server will have $C_{50}^{10} = 1.0272 \times 10^{10}$ possible actions. If the data sizes of these tasks' software are different, it is more complex to combine all available caching actions.
To tackle this challenge, we use a two-layer architecture (TLA) as the output layer of the SCAA-DNN, shown in Fig. \ref{fig:DNN}. The first layer in the TLA contains $F$ neural cells, in which the $f$-th neural cell corresponds to task $f$. Let $O = \left( {{O_1},{O_2}, \cdots ,{O_F}} \right)$ denote the output of the first layer of the TLA. Intuitively, $O_f$ represents the part of state-action value of caching the task $f$-th software. The last layer of the TLA just has one neural cell which does not have the activation unit and outputs the weighted sum of all input variables.
We use $\bm{w}_L = (w_{1,L}, \cdots ,w_{f,L}, \cdots ,w_{F,L})$ to denote the weights of connections between the first layer and the last layer in the TLA, where $w_{f,L}$ is the weight of the connection between the $f$-th cell in the first layer and the last layer in the TLA. To identify the state-action value of a specific action $A_t=\bm{b}_{t+1}$, we assign the value of $A_t$ to $\bm{w}_L$, i.e., $w_{f,L} = b_f^{(t+1)},\forall f \in \mathcal{F}$.
Then, the DNN will output the predicted state-action value, i.e., $ Q(S_t, A_t) =\sum\nolimits_{f \in \mathcal{F}} { b_f^{(t+1)} O_f}$.

\begin{rem}\label{rem:DDQN_complexity}
In practical caching scenarios, the large number of tasks in the library may produce a high-dimension action space and complex network structure in the DDQN because the caching action is a combination of caching some task software. It may result in many neural cells in the output layer of the DNN used in the DDQN, hindering the convergence of the DDQN.
Using the proposed TLA, the complexity of the used neural network in the DDQN is significantly reduced, thus improving the convergence speed of the DDQN. Note that such a design also can be used in other scenarios with high-dimension combined-action space.
\end{rem}

About the training phase, the MEC server caches task software based on the $\varepsilon$-greedy policy \cite{sutton2018reinforcement} at the end of time slot $t$, where the MEC server randomly cached task software with probability $\varepsilon$ or caches task software based on $A_t^*=\argmax\nolimits_{a} Q(S_t, a)$ with probability $(1- \epsilon)$.
At the beginning of slot ($t+1$), the users will generate task computing requests $\bm{\mu}_{t+1}$ and find the COMO decisions $\bm{\alpha}_{t+1}^*$ through \textbf{Algorithm 1} based on the caching state $\bm{b}_{t+1}$ and user request $\bm{\mu}_{t+1}$.
Then, the users accomplish their tasks based on $\bm{\alpha}_{t+1}^*$ and result in energy consumption, i.e., $E_{t+1}^{\text{C}}$. To estimate the reward of the caching action $A_t=\bm{b}_{t+1}$, we set the caching state as empty (i.e., $\bm{b}_{t+1}=\bm{0}$) and obtain the corresponding energy consumption, i.e., $E_{t + 1}^{\text{NC}}$. The user request state $S_t$ in time slot $t$, the action $A_t$, the reward $R_{t+1} = E_{t+1}^{{\text{NC}}} - E_{t+1}^{\text{C}}$, and the state $S_{t+1}$ in the next time slot will be stored in the experience memory and used as the training data for the DDQN. Then, the DDQN samples a batch of data from the experience memory as the training data, each data is in the form of $\left\langle {S_t, A_t, R_{t + 1}, S_{t+1}} \right\rangle$.

Firstly, the DDQN assigns $A_t$ to $\bm{w}_L$ of the evaluation DNN, i.e., $w_{f,L}=b_f^{(t+1)}, \forall f \in \mathcal{F}$. Then, the DDQN assigns values to the weights between the first layer and second layer of the evaluation DNN based on Eq. (\ref{eq:weight1}) and input ${\bm{X}_t} = \{ { \mathbbm{1}(\mu_k^{(t)} \in \mathcal{F}):k \in \mathcal{K}} \}$. Next, the evaluation DNN accomplishes forward process and obtains the predicted state-action value, i.e., $Q(S_t, A_t)$. The training process should make $Q(S_t, A_t)$ approximate the expected state-action value as
\begin{equation}
\setlength{\abovedisplayskip}{3pt}
\setlength{\belowdisplayskip}{3pt}
\bar{Q}(S_t, A_t) = R_{t+1} + \gamma \max_a Q(S_{t+1}, a)
\end{equation}
where $\gamma \in (0,1)$ is discount factor.
For computing the expected state-action value, we use the target DNN in the DDQN to inference the value of $\max_a Q(S_{t+1}, a)$.
To make the learning process more stable, we use the Huber function \cite{agarwal2020optimistic} to quantify the loss instead of the square error function. The loss function is defined as follows.
\begin{equation}
\setlength{\abovedisplayskip}{3pt}
\setlength{\belowdisplayskip}{3pt}
Loss = \left\{ {\begin{array}{*{20}{c}}
{\frac{1}{2}{{(Q(S_t, A_t) - \bar{Q}(S_t, A_t))}^2},}&{\left| {Q(S_t, A_t) - \bar{Q}(S_t, A_t)} \right| < 1}\\
{\left| {Q(S_t, A_t) - \bar{Q}(S_t, A_t)} \right| - \frac{1}{2}},&{\text{otherwise}}.
\end{array}} \right.
\end{equation}
Once the loss function value is calculated, we can train the evaluation DNN by using backward algorithm \cite{goodfellow2016deep}. The detailed steps of the DDQN training algorithm are listed in \textbf{Algorithm 2}.
\begin{algorithm}
\algsetup{linenosize=} \small
\caption{The Training Algorithm for DDQN}
\begin{spacing}{1.3}
\begin{algorithmic}[1]
\STATE Initialize replay memory with capacity $E$, the weight copy frequency $g$
\STATE Initialize the evaluation DNN with random weights $\theta$ and copy $\theta$ to the target DNN
\FOR{time slot $t =1:T$}
    \STATE With probability $\epsilon$ select a random caching state $A_t$ otherwise select $A_t=\argmax\nolimits_{a} Q(S_t,a)$ as the caching state in slot $t+1$
    \STATE Use $A_t$ as the caching state of the MEC server in time slot ($t+1$) and compute the reward $R_{t+1}$
    \STATE Store transition $S_t, A_t, R_{t+1}, S_{t+1}$ in experience memory
    \STATE Sample random mini-batch of transitions $S_t, A_t, R_{t+1}, S_{t+1}$ from experience memory
    \STATE Assign values to the weights between the first layer and second layer based on Eq. (\ref{eq:weight1}).
    \STATE Assign $\bm{b}_{t+1}$ to the weights of the TLA in the evaluation DNN.
    \STATE Input ${\bm{X}_t} = \{ { \mathbbm{1}(\mu_k^{(t)} \in \mathcal{F}):k \in \mathcal{K}} \}$ to the evaluation DNN and obtain $Q(S_t,A_t)$
    \STATE Perform a gradient descend step on loss function with respect to the DNN parameters
    \STATE Update the target DNN every $g$ slots
\ENDFOR
\end{algorithmic}
\end{spacing}
\end{algorithm}

In the DDQN inference phase, we first assign values to the weights between the first layer and second layer based on Eq. (\ref{eq:weight1}). Then, we input ${\bm{X}_t} = \{ \mathbbm{1}(\mu_k^{(t)} \in \mathcal{F}):k \in \mathcal{K} \}$ to DNN and forwards to the first layer of TLA and output $\bm{O} = \left( {{O_1},{O_2}, \cdots ,{O_F}} \right)$. Finally, we need find the optimal caching state in time slot ($t+1$) (i.e., $\mathop {\argmax }\nolimits_{\bm{b}_{t+1}} Q(\bm{\mu}_t,\bm{b}_{t+1})$). We formulate the optimal caching state problem as follows.
\begingroup
\setlength{\abovedisplayskip}{0pt}
\setlength{\belowdisplayskip}{2pt}
\begin{align}
\widetilde{\mathcal{P}_2}:~~\max_{\bm{b}_{t+1}}~~& \sum\nolimits_{f \in \mathcal{F}} {b_f^{(t+1)} O_f} \label{prob:Pro_maxAction}\\
\text{s.~t.~~}& \sum\nolimits_{f \in \mathcal{F}}  b_f^{(t+1)} D_f \le C,\label{cons:MA1}\tag{\theequation a}\\
&b_f^{(t+1)} \in \left\{ {0,1} \right\}.\label{cons:MA2}\tag{\theequation b}
\end{align}
\endgroup
Problem $\widetilde{\mathcal{P}_2}$ is a typical Knapsack problem \cite{garey1979computers}. Below we introduce a recursive function to derive the optimal solution. For ease of presentation, we first define a $F \times C$ matrix $\Xi$, in which $\Xi(f,c)$ represents the optimal solution under the first $f$ tasks using a cache size of $c$. The value of $\Xi(f,c)$ is given by the following recursive function.
\vspace{-0.3cm}
\begin{equation}
\setlength{\abovedisplayskip}{3pt}
\setlength{\belowdisplayskip}{3pt}
\Xi(f,c) = \mathop {\max}\nolimits_{b_f^{(t+1)}} ( \Xi(f - 1,c - b_f^{(t+1)} D_f) + b_f^{(t+1)} O_f ).
\end{equation}

Through the above recursive function, the optimal solution of problem $\widetilde{\mathcal{P}_2}$ can be derived by the argument of $\Xi(F,C)$. For clarity, we conclude the detailed steps of solving optimal caching state in \textbf{Algorithm 3} whose time complexity is $\mathcal{O}(2FC+F)$.
\begin{algorithm}[t]\small
\caption{Algorithm for Solving the Optimal Action}
\hspace*{0.02in} {\bf Input:}
$O_t, \{D_f: f \in \mathcal{F}\}$\\
\hspace*{0.02in} {\bf Output:}
 The optimal caching state $\bm{b}_{t+1}$
\begin{algorithmic}[1]
\STATE $\bm{b}_{t+1}=[0]_{F}, \Xi= \left[ 0 \right]_{F \times C}, \Xi_r= \left[ 0 \right]_{F \times C}$;
\FOR{each $f \in [1,F]$}
    \IF{$f < F$}
        \FOR{each $c \in [1,C]$}
            \IF{$f==1$}
                \STATE $\Xi_r(f,c)= \mathbbm{1}(D_f < c)$
                \STATE $\Xi(f,c)=\Xi_r(f,c) O_f$
            \ELSE
                \STATE ${\Xi_r}(f,c) = \argmax\limits_{a \in \left\{0,1 \right\}} ( \Xi(f - 1,c - a D_f) + a O_f )$
                \STATE $\Xi(f,c) = {\Xi_r}(f,c){O_f} + \Xi(f - 1,c - {\Xi_r}(f,c){D_f})$
            \ENDIF
        \ENDFOR
    \ELSE
        \STATE ${\Xi_r}(F,C) = \argmax\nolimits_{a \in \left\{ {0,1} \right\}} ( \Xi(F - 1,C - a D_F) + a O_F )$
        \STATE $\Xi(F,C) = {\Xi_r}(F,C){O_F} + \Xi(F - 1,C - {\Xi_r}(F,C){D_F})$
    \ENDIF
\ENDFOR

\STATE $\bm{b}_{t+1}(F) = \Xi_r(F,C)$
\FOR{each $f =F-1:-1:1$}
    \STATE $\bm{b}_{t+1}(f)={\Xi_r}(f,C-\sum_{f+1\leq j\leq F}\bm{b}_{t+1}(j)*D_j)$
\ENDFOR
\RETURN $\bm{b}_{t+1}$
\end{algorithmic}
\end{algorithm}

Once the optimal caching state $\bm{b}_{t+1}$ in time slot $t+1$ is derived, the MEC server can calculate the optimal TSCU policy in time slot $t$, i.e., $ \beta_f^{(t)} = b_f^{(t + 1)}- b_f^{(t)}$. Then, the MEC server can update its cache space and assist the COMO in time slot ($t+1$). For clarity, we conclude the detailed steps of the DDQN inference phase in \textbf{Algorithm 4}.
In addition, for ease of understanding, Fig. \ref{fig:alg_con} illustrates the connections between all algorithms and the physical system model.
\begin{algorithm}
\algsetup{linenosize=} \small
\caption{The Inference Algorithm of DDQN}
\begin{spacing}{1.3}
\begin{algorithmic}[1]
\STATE Assign values to the weights between the first layer and second layer based on Eq. (\ref{eq:weight1}).
\STATE Input ${X_t} = \{ { \mathbbm{1}(\mu_k^{(t)} \in \mathcal{F}):k \in \mathcal{K}} \}$ to the first layer of the DNN, then the DNN forwards to the first layer of TLA and output $\bm{O} = \left( {{O_1},{O_2}, \cdots ,{O_F}} \right)$
\STATE Solve the optimal caching state in the next time slot using \textbf{Algorithm 3}
\STATE Calculate the optimal caching update policy based on $\beta_f^{(t)} = b_f^{(t + 1)}- b_f^{(t)}, \forall f \in \mathcal{F}$
\end{algorithmic}
\end{spacing}
\end{algorithm}

\begin{figure}[t!]
  \centering
  \setlength{\abovecaptionskip}{0.cm}
  \includegraphics[width=0.9\textwidth]{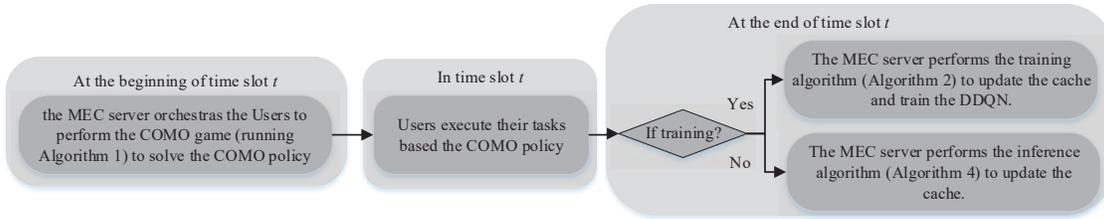}\\
  \caption{An illustration for connections between the  proposed algorithms and the system model.}
  \label{fig:alg_con}
  \vspace{-0.8cm}
\end{figure}
\vspace{-0.8cm}
\section{Simulation Results}\label{sec:simu}
This section evaluates the proposed dynamic TSCU-based COMO scheme by comparing its performances with the following baseline schemes. Note that these baselines for caching updates do not include the COMO policy. For fairness, we add the COMO policy proposed in this work to these baselines for forming TSCU assisted COMO schemes. Moreover, we use the COMO policy proposed in this work as a baseline for illustrating the advantages of TSCU.
\begin{itemize}
  \item The least recently used caching-based MEC (LRU-MEC) updates task software caching based on LRU policy \cite{8362880}, in which the MEC server keeps the most recently requested task software in the MEC server cache memory. When the cache storage is full, the cached task software, which is requested least recently, will be replaced by the new task software.

  \item The least frequently used caching-based MEC (LFU-MEC) updates task software caching based on LFU policy \cite{8362880}, in which the MEC server caches the task software with highest request count which is calculated by the request information of past time slots. When the cache storage is full, the cached task software, which is requested the least many times, will be replaced by the new task software.

  \item The first in first out-based MEC (FIFO-MEC) update task software caching according to FIFO policy \cite{8778670}.

  \item The local most popular caching-based MEC (LMP-MEC) updates the cache based on LMP algorithm \cite{9110932}, which predicts the next request based on both long-term file popularity and short-term temporal correlations in request sequences.

  \item MEC offloading: The MEC offloading scheme utilizes the proposed potential game-based COMO algorithm to decide the executive method of users' tasks under the empty task software caching state of the MEC server. It only has two ways of task computing, i.e., local computing and non-caching based COMO.

\end{itemize}

In the simulations, the proposed scheme and benchmark schemes are implemented using Python and Pytorch. It is assumed that $K$ users are randomly distributed over a $200$m$\times 200$m single cell, and the BS is sited in the cell's center.
Similar with \cite{rappaport1996wireless, 7307234}, the channel gain is modeled as $h_{k,t} = \rho_k(t) d_k^{- n}$ where $d_k$ is the distance between user $k$ and the BS, $\rho_k(t) \sim \text{Exp}(1)$ is exponentially distributed with unit mean, which represents the small-scale fading channel power gain from user $k$ to the MEC server in slot $t$, and $n$ is the path loss factor.
According to the realistic measurements in \cite{miettinen2010energy}, we set the energy coefficient $\zeta$ as $5 \times {10^{-27}}$.
The input parameters data size of each task, i.e., $I_f$, is uniform randomly selected in $[1,I_{\max}]$ Megabytes.
The software data size of each task, i.e., $D_f$, is uniform randomly selected in $[1,D_{\max}]$ Gigabytes.
The required CPU cycles for computing task $k$, i.e., $S_f$, is randomly selected in $[1,S_{\max}]$ Gigacycles.
The parameters chosen in the simulation are based on the parameter setting of a typical MEC network \cite{7307234,9110932,8418400}.
Unless otherwise stated, the primary simulation environment settings are summarized in Table II.

In terms of the user task request $\mu_k^{(t)}$, we use $\Pr [\mu_k^{(t + 1)} = j | {\mu_k^{(t)} = i}]$ to denote the transition probability from task $i$  to $j$ ($i, j \in \overline{\mathcal{F}}$) of user $k$.
Similar to \cite{9110932} and \cite{8418400}, we assume that all users' request transition probabilities follow the same request transition model as follows.
\begingroup
\setlength{\abovedisplayskip}{0pt}
\setlength{\belowdisplayskip}{2pt}
\begin{equation}
\setlength{\abovedisplayskip}{3pt}
\setlength{\belowdisplayskip}{1pt}
\Pr [\mu_k^{(t + 1)} \!=\! j | {\mu_k^{(t)} = i} ]
= \left\{ {\begin{array}{*{20}{c}}
{R},\!&\!{i \in \bar{\mathcal{F}},j = 0},\\
{(1 - R)\frac{1/j^{\delta}}{\sum\nolimits_{m = 1}^F 1/m^{\delta} }},\!&\!{i = 0,j \in \mathcal{F}},\\
{(1 - R)\frac{1}{N}},\!&\! j = (i + q)\text{mod}(F + 1), i \in \mathcal{F},q \in \{ 1,2, \cdots,N\},\\
{0,}\!&\!{\text{otherwise}}.
\end{array}} \right.
\end{equation}
\endgroup
$\Pr [ \mu_k^{(t + 1)} \!=\! j | {\mu_k^{(t)} = i}]$ is parameterized by $\left\langle{R,\delta,N} \right\rangle$. Specifically, $R$ is the transition probability of requesting nothing given any task request at the current time slot. The transition probability of any task $f \in \mathcal{F}$ under no current file request is modeled as a Zipf distribution which parameterized by $\delta$. For any task $i \in \mathcal{F}$, we assign a set of $N$ neighboring tasks, i.e., ${\mathcal{N}} = \left\{ {f \in \mathcal{F},f = (i + n)\mod(F + 1),n = 1,2, \cdots N} \right\}$. Then, the transition probability of requesting any task $f \in \mathcal{N}$ under the current task request $i \in \mathcal{F}$ is modeled as the uniform distribution. The transition probability of requesting any task $f \notin \mathcal{N}$ under the current task request $i \in \mathcal{F}$ is zero.
It is worth mentioning we provide the transition probability in the simulation parts to establish the environment. It does not mean the proposed solution relies on the known transition model. In fact, the proposed solution is a model-free approach. In the following results, we alter the transition probability parameters to verify that the proposed solution has the ability to handle problems with different transition probabilities.
\begin{table}[ht]\label{tab:simu_para}
\vspace{-0.6cm}
\small
\caption{Simulation Settings}
\vspace{-0.3cm}
\begin{tabular}{|p{5.5cm}|p{1.5cm}|p{5.5cm}|p{1.5cm}|}
\hline
Parameter & Value & Parameter & Value\\
\hline
User number: $K$    & 20 &
Task number: $F$    & 50 \\
\hline
Number of time slots: $T$    & 2000 &
Wireless transmission bandwidth: $B$    & 30 MHz \\
\hline
Transmission power of devices: $p_k$    & 0.5 W &
White Gaussian noise variance: $\sigma^2$ & $2 \times {10^{-13}}$ \\
\hline
CPU capability of user $k$: $f_k$ & 1 GHz  &
CPU capability of the MEC server: $f_{\text{C}}$ & 20 GHz \\
\hline
Cache size of the MEC server: $C$ & 2 GB &
The number of channels: $M$ & 10 \\
\hline
Path loss factor: $n$ & $4$ &
$I_{\max}$ & 5 \\
\hline
$D_{\max}$ & 5 &
$S_{\max}$ & 5 \\
\hline
Learning rate of DNN & 0.0001 &
Experience replay memory size: $E$ & 1000 \\
\hline
Batch size & 8 & Discount factor: $\gamma$ & 0.9\\
\hline
$\tau$ & 5ms &
& \\
\hline
\end{tabular}
\vspace{-0.6cm}
\end{table}

In Fig. \ref{fig:game_conver}, the black solid curve represents the reduced energy consumption per training slot of the proposed TSCU-based COMO scheme. The black dash line represents
the counterpart with conventional way that uses the user request $\bm{\mu}_t$ as the input of the DNN, and all weights between the first and the second layer are connected.
These two curves are plotted using the moving average with a window equal 20. The blue dash curve shows the dynamics of the system-wide energy consumption in one slot with the empty storage status of the MEC server. We can see that the potential game-based COMO algorithm rapidly converge to a stable point, i.e., the NE of the multi-user COMO game.
Moreover, the reduced energy consumption (black curve) increases as the training slots increase and reaches the maximum reduction value when the learning process becomes stable. It is valuable to note that the proposed scheme can rapidly converge to the maximum reduction value point (less than 1000 slots). Most existing DRL-based caching works usually consume more than $10^4$ training slots, like \cite{8778670,9110932}. Compared with directly inputting users' request state to the DNN, the proposed SCAA approach is able to reduce the learning complexity and accelerate the convergence of the DDQN.

In Fig. \ref{fig:cache_size_EE}, we show that the impact of the MEC server's cache size on the average energy consumption over each time slot of the proposed scheme and the five baselines.
We can see that all schemes' average energy consumption over each time slot, except the MEC offloading scheme, is reduced with the increase of cache size. This reduction is because the larger cache size allows the storage of more task software. Thus, the requested tasks will have a higher hit rate at the MEC server, which means that more users can execute their tasks through a lower-cost method, i.e., caching-based COMO.
When the cache size is 0, the MEC server cannot cache any task software, and all schemes only can execute tasks through non-caching based COMO or local computing. There is no distinction between these schemes in this case. When the cache size is big enough to cache all the task software (over 18GB), all schemes have the same performance.
In this case, the MEC serve can cache all task software in the task library. Thus, the users can execute their tasks through local computing or caching-based COMO, and there is also no difference between these schemes. However, in practical systems, the cache size of the MEC server is limited and usually cannot cache all the task software.
Specifically, when the cache size is 8GB, the proposed scheme save around 39\% energy than LMP-MEC scheme.
\begin{figure}
  \begin{minipage}[t]{0.5\linewidth}
    \centering
    \setlength{\abovecaptionskip}{0.cm}
    \includegraphics[width=1\textwidth]{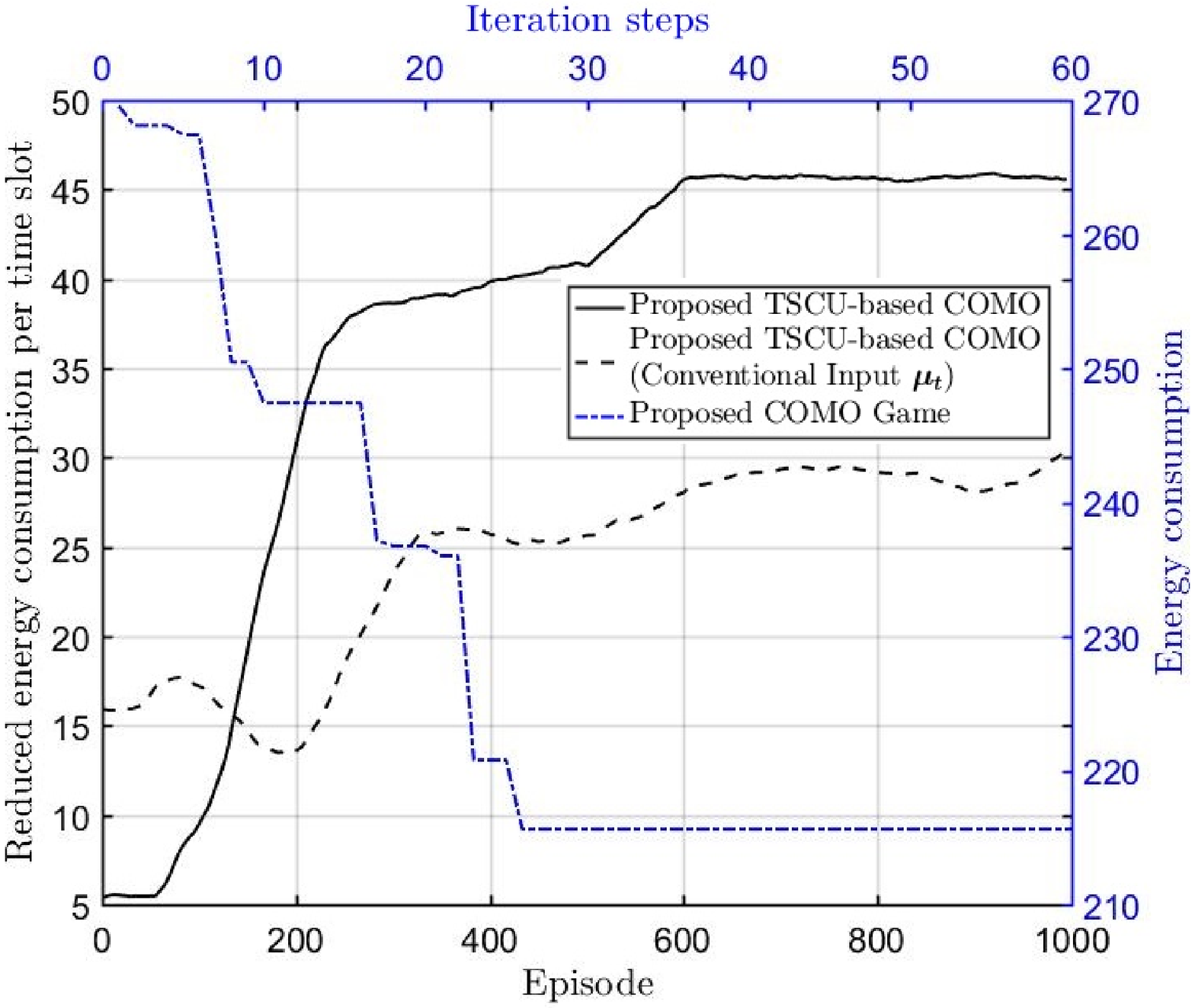}
    \caption{The energy consumption of users with respect to iteration steps under the empty caching state of the MEC server.}
    \label{fig:game_conver}
  \end{minipage}%
  \hspace{.15in}
  \begin{minipage}[t]{0.5\linewidth}
    \centering
    \setlength{\abovecaptionskip}{0.cm}
    \includegraphics[width=1\textwidth]{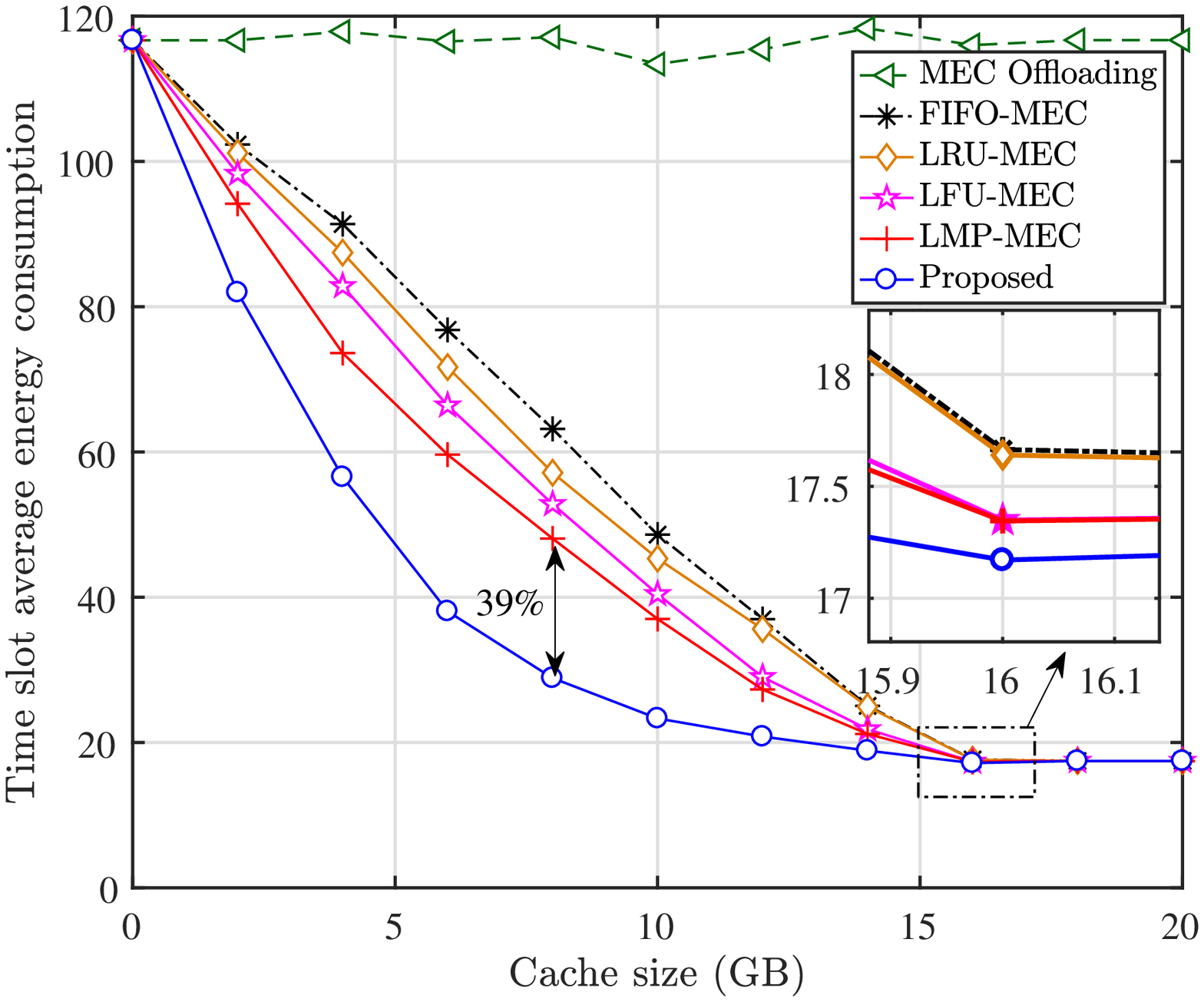}
    \caption{Comparison of the average energy consumption over each time slot against different cache size of the MEC server.}
    \label{fig:cache_size_EE}
  \end{minipage}
\vspace{-0.8cm}
\end{figure}

Fig. \ref{fig:TaskNum_EE} plots the average energy consumption over each time slot of the six schemes versus the number of tasks in the task library.
We can observe that the average energy consumption over each time slot of the caching-based schemes (i.e., LRU-MEC, LFU-MEC, FIFO-MEC, LMP-MEC, and the proposed scheme) increased with the increase of task number. The range of users' task requests will be more expansive with the rise of task number, which may decrease the prediction accuracy of the task software caching schemes and further decrease the reusable of the cached task software.
In addition, it also can be observed that the proposed scheme outperforms the other schemes. When the task number is 10, the proposed scheme can save up to 62\% of energy than the best baseline (LMP-MEC). This benefit comes from the more accurate prediction of users' task demand and the learned knowledge of computing energy consumption about different users.

Fig. \ref{fig:UserNum_delata_EE} shows that how the average energy consumption over each time slot varies with the number of users under different environmental parameters $\delta$. Compared with the best baseline scheme (LMP-MEC), the proposed scheme achieves the lower average energy consumption over each time slot across all user number configurations. Moreover, it is observed that the average energy consumption over each time slot of the two schemes keeps decreasing with the increase of $\delta$. In fact, as $\delta$ increases, most of the user requests concentrate on a few tasks, and the remaining tasks in the library have a very low probability of being requested. Thus, a large $\delta$ is able to improve the prediction accuracy of the two task software caching schemes, and the cached task software has a higher probability of being used. Besides, the proposed scheme saves over 25\% of energy when the user number exceeds 50 compared to the LMP-MEC scheme.
\begin{figure}
  \begin{minipage}[t]{0.5\linewidth}
    \centering
    \setlength{\abovecaptionskip}{0.cm}
    \includegraphics[width=1\textwidth]{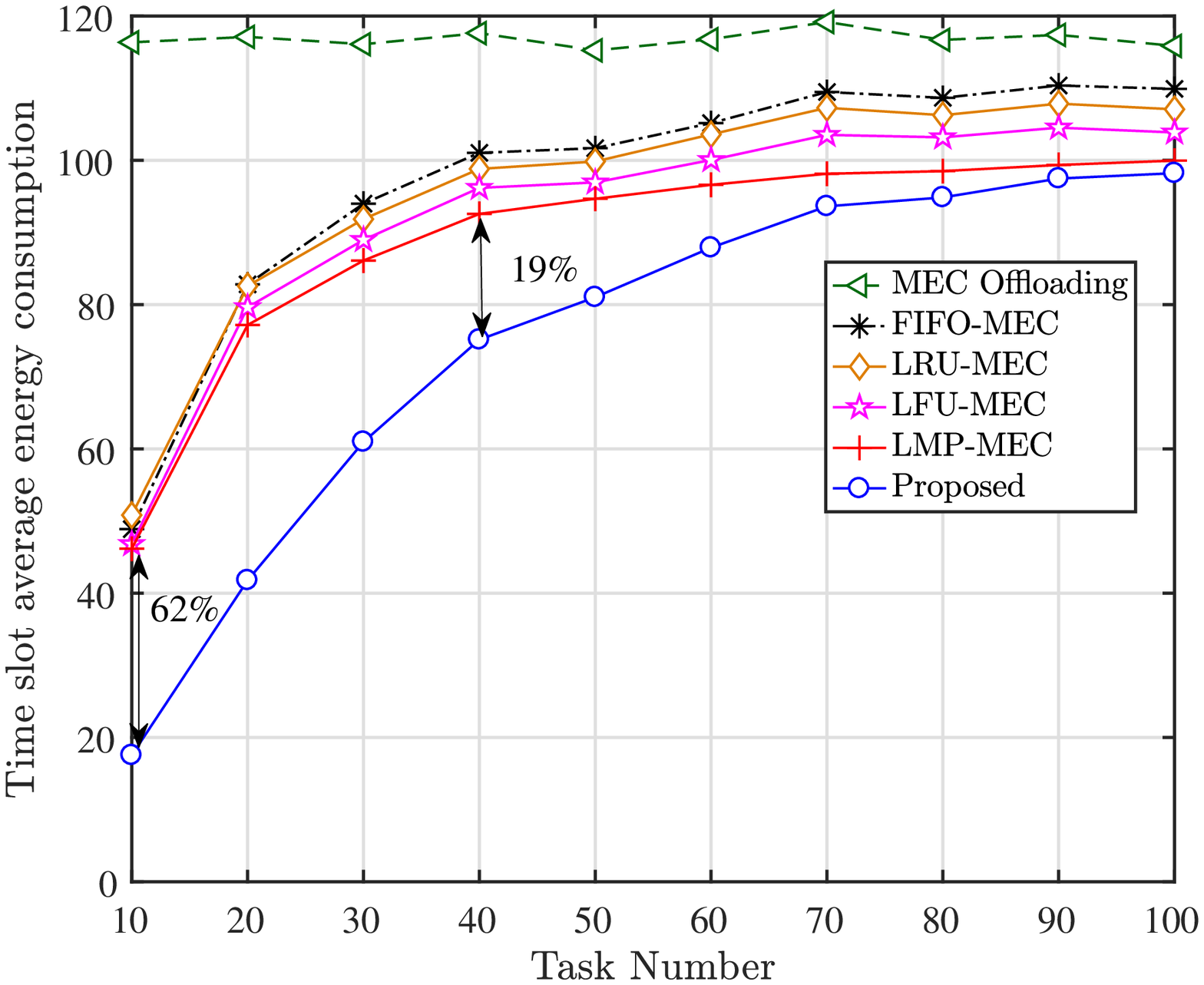}
    \caption{Comparison of the average energy consumption over each time slot against different task number.}
    \label{fig:TaskNum_EE}
  \end{minipage}%
  \hspace{.15in}
  \begin{minipage}[t]{0.5\linewidth}
    \centering
    \setlength{\abovecaptionskip}{0.cm}
    \includegraphics[width=1\textwidth]{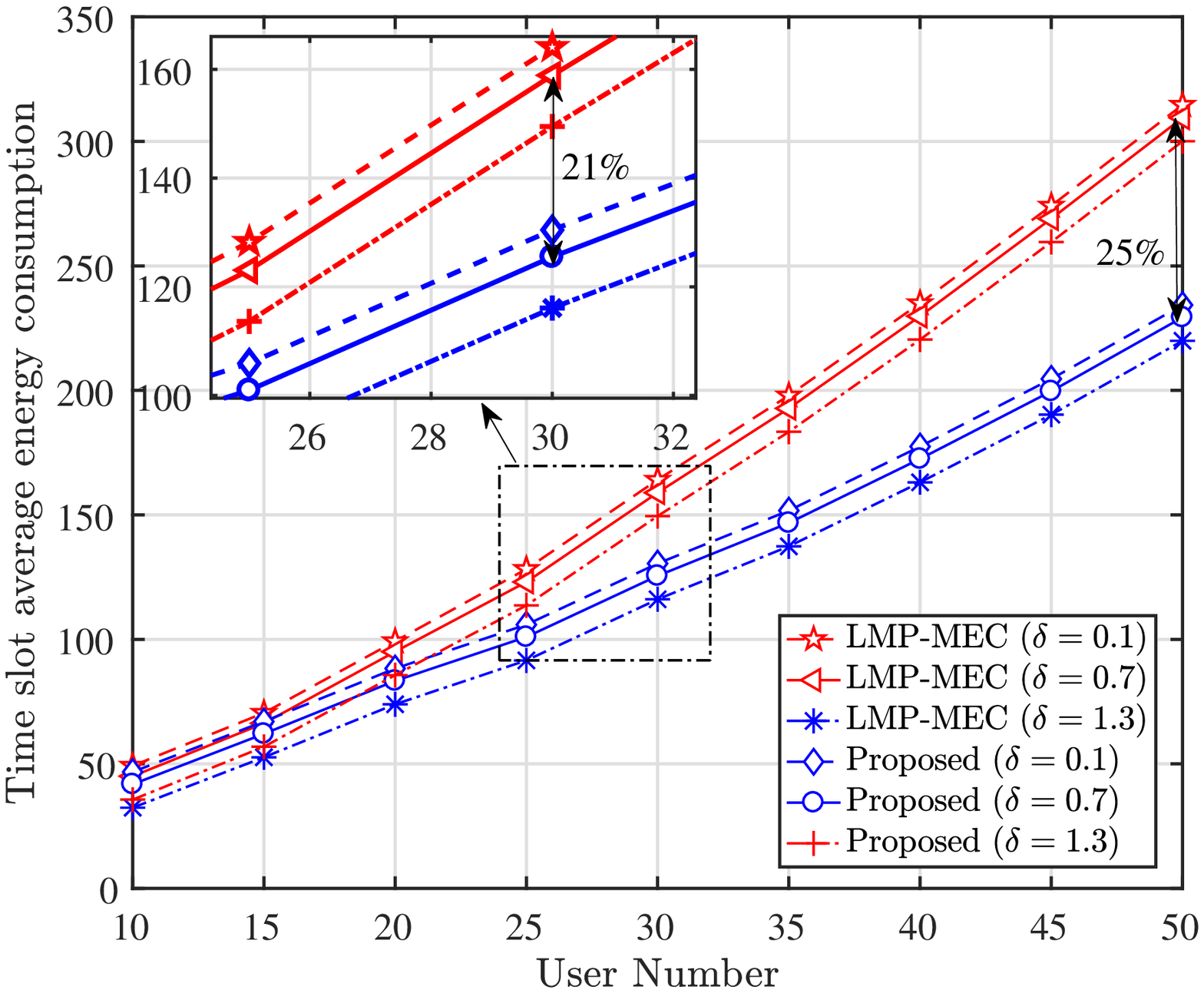}
    \caption{Comparison of the average energy consumption over each time slot against different user number.}
    \label{fig:UserNum_delata_EE}
  \end{minipage}
\vspace{-0.8cm}
\end{figure}

Fig. \ref{fig:Smax_R_EE} plots the average energy consumption over each time slot of the proposed and LMP-MEC scheme. We can see that the average energy consumption over each time slot of both the proposed and LMP-MEC scheme keeps increasing along with the increase of $S_{\text{max}}$.
Using the LMP-MEC scheme as the baseline, the proposed scheme reduces energy consumption by 11.5\% to 22\% across the parameter setting of $S_{\text{max}}$.
The reason is that the growth of $S_{\text{max}}$ will increase the average computation load of tasks, leading to the increases of the local computing energy consumption and the execution delay of the offloaded tasks. The rise of execution delay at the MEC server is likely to reduce the number of offloaded tasks, inducing the average energy consumption growth over each time slot for both schemes. Besides, we can observe that the average energy consumption over each time slot of both schemes decreased with the increase of $R$. The number of users who request to execute tasks will decrease with the rise of $R$. That is to say, the total number of tasks executed in a slot is likely to decline with the increase of $R$, resulting in the growth of average energy consumption.

We reveal the impact of the parameter $D_{\text{max}}$ and $N$ on the average energy consumption over each time slot in Fig. \ref{fig:Dmax_N_EE}. We can see that the average energy consumption over each time slot of the proposed schemes keeps increasing along with the increase of $D_{\text{max}}$. This phenomenon results from that the growth of $D_{\text{ max }}$ will increase the average size of the tasks' software, reducing the number of task software that are cached at the MEC server and increasing the transmission delay and energy consumption of COMO. As the varying of $D_{\text{ max }}$, the proposed scheme is able to save about 12\%-16\% energy compared with the best baseline, LMP-MEC.
Besides, the average energy consumption over each time slot of the proposed scheme increases along with $N$. The reason is that the users' task request range will be more expansive with the increase of $N$, which will reduce the prediction accuracy of the task software caching schemes and further increase the average energy consumption over each time slot. Moreover, the gap between $N=3$ and $N=5$ is larger than the gap between $N=5$ and $N=10$. When $N$ increases to a large number (around 5), every user has the same probability of requesting five tasks. The tasks that all users may request is likely to cover the task library, and the request probability of each task are approximate. In this case, the prediction accuracy may converge to a stable point. Thus, the increment of energy consumption is small with the increase of $N$. In fact, when $N$ increase to a large value, the average energy consumption of all task software caching schemes will keep stable.

\begin{figure}
  \begin{minipage}[t]{0.5\linewidth}
    \centering
    \setlength{\abovecaptionskip}{0.cm}
    \includegraphics[width=1\textwidth]{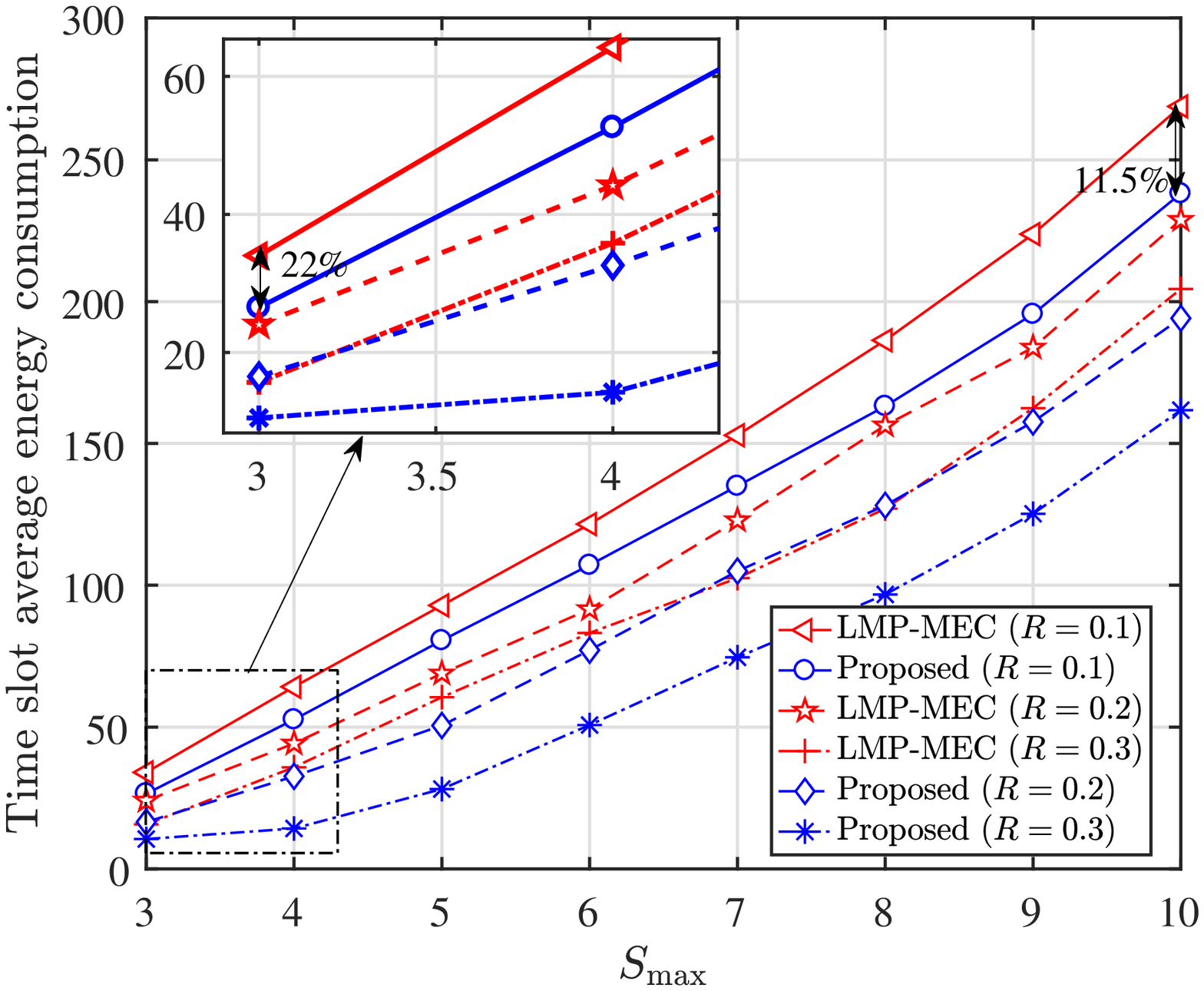}
    \caption{Comparison of the average energy consumption over each time slot against users' request transition probability parameter $R$.}
    \label{fig:Smax_R_EE}
  \end{minipage}%
  \hspace{.15in}
  \begin{minipage}[t]{0.5\linewidth}
    \centering
    \setlength{\abovecaptionskip}{0.cm}
    \includegraphics[width=1\textwidth]{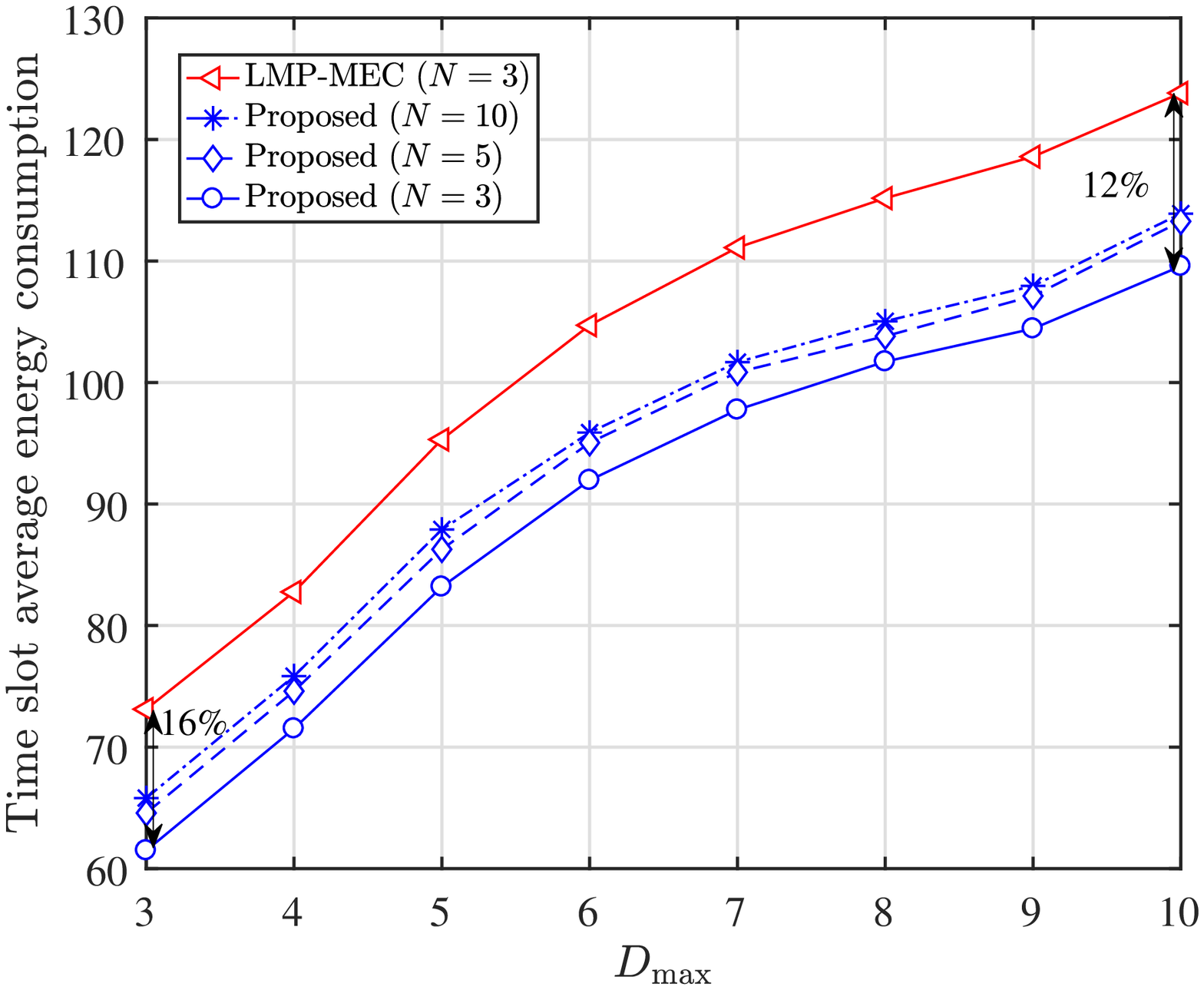}
    \caption{Comparison of the average energy consumption over each time slot against users' request transition probability parameter $N$.}
    \label{fig:Dmax_N_EE}
  \end{minipage}
\vspace{-0.8cm}
\end{figure}
\vspace{-0.8cm}
\section{Conclusion}\label{sec:conclu}
In this paper, we have investigated a joint TSCU and COMO problem in a dynamic multi-user MEC network to minimize the users' task execution energy consumption while satisfying the task execution delay constraint. Through detailed analysis, we have proposed to solve the problem through two stages.
Firstly, we reformulated the COMO problem as a multi-user COMO game and proposed a decentralized COMO algorithm to obtain its NE solution under any task software caching state. Then, we developed a DDQN-based TSCU algorithm to solve the optimal caching update strategy for the MEC server. The proposed scheme can capture task popularity, inter-task request correlation, users' communication conditions and computing capabilities. Simulations results show that the proposed method can rapidly converge to stable and precisely predict users' future task demands and outperform the other benchmark approaches in energy consumption.
In future work, we will optimize the bandwidth usage, time delay, and energy consumption under a practical MEC case with cloud-aided backhaul and asynchronous traffic.

\vspace{-0.8cm}
\appendix
\subsection{Proof of Lemma \ref{lem:NPhard}}\label{App:A1}
We prove that problem $\mathcal{P}$ is NP-hard via the restriction method \cite{garey1979computers}. Specifically, we show that the problem $\mathcal{P}$ can be restricted to a maximum cardinality bin packing problem. For clarity, we introduce the maximum cardinality bin packing problem \cite{loh2009solving}:
Given $K$ items with sizes $s_k$, $k \in \{1,2,\cdots, K\}$, and $M$ bins of identical capacity $Q$, the objective is to assign a maximum number of items to the fixed number of bins without violating the capacity constraint.

The NP-hardness of the maximum cardinality bin packing problem has been proved in \cite{loh2009solving}. To prove that Problem $\mathcal{P}$ is NP-hard, let us show that $\mathcal{P}$ contains a maximum cardinality bin packing problem as a special case. To this end, let us focus on one specific time slot $t$ by setting $ T = 1$, and assume that both the caching state of the MEC server $\bm{b}_t$ and the users' task request $\bm{\mu}_t$ are known. Thus, problem $\mathcal{P}$ is restricted as the following problem.
\begingroup
\setlength{\abovedisplayskip}{-4pt}
\setlength{\belowdisplayskip}{1pt}
\begin{align}
\mathcal{\widehat{P}}:~~\max_{\bm{\alpha}_t}~~& - \sum\nolimits_{k \in \mathcal{K}} E_{k,t}\label{prob:restrict1}\\
\text{s.~t.~~}& (\text{\ref{cons:one2}}), (\text{\ref{cons:one5}}).\notag
\end{align}
\endgroup
For problem $\mathcal{\widehat{P}}$, $\alpha_{k,t} = 0$ if and only if $E_{k,f}^{\text{L}} \le (1 - b_f^{(t)})E_{k,f,t}^{\text{O}} + b_f^{(t)}E_{k,f}^{{\text{C}}}$, otherwise user $k$ will select a channel to offload its task. Inspired by this, we further restrict problem $\mathcal{\widehat{P}}$ by setting $\alpha_{k,t} \in \mathcal{M}$ to just consider users execute their tasks through COMO. Additionally, we regard all users' COMO cost as -1 (i.e., $(1 - b_f^{(t)})E_{k,f,t}^{\text{O}} + b_f^{(t)} E_{k,f,t}^{\text{C}} = -1$) and each user request to execute a task $\mu_k^{(t)} \in \mathcal{F}$.  For ease of proof, we introduce a binary variable $\alpha_{k,m}^{(t)}$, where $\alpha_{k,m}^{(t)}= 1$ if and only if $\alpha_{k,t} = m$, otherwise is 0. Thus, we reformulate the restricted problem $\mathcal{\widehat{P}}$ as follows.
\begingroup
\setlength{\abovedisplayskip}{-4pt}
\setlength{\belowdisplayskip}{0pt}
\begin{align}
\widetilde{\mathcal{P}}:~~\max_{\bm{\alpha}_t}~~&\sum\nolimits_{k \in \mathcal{K}}  \sum\nolimits_{m \in \mathcal{M}} \alpha_{k,m}^{(t)}\label{prob:restrict2}\\
\text{s.~t.~~}& \sum\nolimits_{m \in \mathcal{M}} {\alpha_{k,m}^{(t)} \le 1} ,\tag{\theequation a}\\
&\sum\nolimits_{k \in \mathcal{K}} {\alpha_{k,m}^{(t)}{p_k}{h_k}}  \le Q,\tag{\theequation b}\\
&\alpha_{k,m}^{(t)} \in \left\{ {0,1} \right\},\tag{\theequation c}
\end{align}
\endgroup
where the capacity $Q$ is
\vspace{-0.3cm}
\begin{equation}\label{eq:Qtemp}
Q = \frac{p_k h_k}{{{2^{\frac{p_k \tau^2 (I_f + D_f - b_f^{(t)} D_f)}{B \zeta S_f^3}}} - 1}} - {\sigma ^2} + {p_k}{h_k}.
\end{equation}
Note that (\ref{eq:Qtemp}) follows from (\ref{ieq:off_not}).
For the restricted problem $\widetilde{\mathcal{P}}$, we regard the items and the bins in the maximum cardinality bin packing problem as the users and channels in problem $\mathcal{P}$, respectively. The size of item $k$ is $s_k = p_k h_k$. The objective of problem $\widetilde{\mathcal{P}}$ is to assign a maximum number of items to the fixed number of bins and satisfy the capacity constraint.
Thus, if problem $\widetilde{\mathcal{P}}$ can be effectively solved, the maximum cardinality bin packing problem can also be solved by a polynomial time algorithm. This manifests that the original problem $\mathcal{P}$ can be reduced to a maximum cardinality bin packing problem. Therefore, we can conclude that problem $\mathcal{P}$ is NP-hard.
\subsection{Proof of Remark \ref{rem:NE_exist}}\label{App:A2}
For user $k$, when the COMO decisions of other users except user $k$ (i.e., $\alpha_{-k,t}$) are given, we use $\alpha_{k,t}$ and $\alpha_{k,t}'$ to denote two different task offloading decisions of user $k$. Based on the definition of ordinal potential game in \cite{yamamoto2015comprehensive}, game $\bm{G}$ should satisfy
\begin{equation}\label{eq:conditonPG}
\setlength{\abovedisplayskip}{0pt}
\setlength{\belowdisplayskip}{0pt}
\mathop{\text{sgn}} [\phi (\alpha_{k,t},\alpha_{- k,t}) - \phi (\alpha_{k,t}',\alpha_{- k,t})]
= \mathop{\text{sgn}} [{f_t}({\alpha_{k,t}},{\alpha_{- k,t}}) - {f_t}(\alpha_{k,t}' ,{\alpha_{- k,t}})],
\end{equation}
where ${\mathop{\text{sgn}}} [\cdot]$ is a signum function. For ease of proof, we first derive the expression of $\phi (\alpha_{k,t},\alpha_{- k,t})$ as follows.
\begingroup
\setlength{\abovedisplayskip}{0pt}
\setlength{\belowdisplayskip}{0pt}
\begin{align}\label{eq:phi_rewrite}
&\phi(\alpha_{k,t},\alpha_{- k,t}) \notag =\frac{1}{2}\sum\limits_{k = 1}^K \sum\limits_{n \ne k} {p_k h_k p_n h_n \mathbbm{1}({\alpha_{n,t}} = {\alpha_{k,t}})} \mathbbm{1}(\alpha_{k,t} > 0) + \sum\limits_{k = 1}^K {p_k}{h_k}{V_k}\mathbbm{1}(\alpha_{k,t} = 0) \notag\\[-0.2cm]
&= \frac{1}{2}\sum\nolimits_{n \ne k} {{p_k h_k p_n h_n}\mathbbm{1}({\alpha_{n,t}} = {\alpha_{k,t}})}\mathbbm{1}({\alpha_{k,t}} > 0) + \frac{1}{2}\sum\nolimits_{l \ne k}^K {p_l h_l p_k h_k}\mathbbm{1}({\alpha_{k,t}} = {\alpha_{l,t}})\mathbbm{1}({\alpha_{l,t}} > 0) \notag\\[-0.2cm]
&+ \frac{1}{2}\sum\limits_{l \ne k}^K \sum\limits_{n \ne l,n \ne k} {{p_l h_l p_n h_n}\mathbbm{1}({\alpha_{n,t}} = {\alpha_{l,t}})} \mathbbm{1}({\alpha_{l,t}} > 0) + {p_k h_k V_k}\mathbbm{1}({\alpha_{k,t}} = 0) + \sum\limits_{l \ne k}^K {p_l h_l V_l}\mathbbm{1}({\alpha_{l,t}} = 0) \notag\\[-0.2cm]
&= {p_k h_k}\sum\limits_{n \ne k} {{p_n h_n}\mathbbm{1}({\alpha_{n,t}} = {\alpha_{k,t}})} \mathbbm{1}({\alpha_{k,t}} > 0) + \frac{1}{2}\sum\limits_{l \ne k}^K \sum\limits_{n \ne l,n \ne k} {{p_l h_l p_n h_n}\mathbbm{1}({\alpha_{n,t}} = {\alpha_{l,t}})} \mathbbm{1}({\alpha_{l,t}} > 0) \notag\\[-0.2cm]
&+ {p_k h_k V_k}\mathbbm{1}({\alpha_{k,t}} = 0) + \sum\nolimits_{l \ne k}^K {p_l h_l V_l}\mathbbm{1}({\alpha_{l,t}} = 0).
\end{align}
\endgroup
Below we discuss the relationship between $\phi(\alpha_{k,t},\alpha_{- k,t}) - \phi(\alpha_{k,t}' ,\alpha_{- k,t})$ and $f_t(\alpha_{k,t},\alpha_{- k,t}) - f_t(\alpha_{k,t}',\alpha_{- k,t})$ in three cases.
\begin{enumerate}[label={\arabic*)}]
\item ${\alpha_{k,t}} > 0,\alpha_{k,t}' > 0$. According to (\ref{eq:phi_rewrite}), we have
\begingroup
\setlength{\abovedisplayskip}{-4pt}
\setlength{\belowdisplayskip}{-1pt}
\begin{align}\label{eq:diff_phi}
\phi ({\alpha_{k,t}},{\alpha_{- k,t}}) - \phi (\alpha_{k,t}',{\alpha_{- k,t}}) &={p_k h_k}\sum\limits_{n \ne k} {{p_n h_n}\mathbbm{1}({\alpha_{n,t}} = {\alpha_{k,t}})} - {p_k h_k}\sum\limits_{n \ne k} {{p_n h_n}\mathbbm{1}({\alpha_{n,t}} = {\alpha_{k,t}}' )} \notag\\[-0.2cm]
&=p_k h_k\left(\Upsilon_{k,t} - \Upsilon_{k,t}' \right).
\end{align}
\endgroup
Based on (\ref{eq:energy_kt}), we have
\begin{equation}
\setlength{\abovedisplayskip}{1pt}
\setlength{\belowdisplayskip}{1pt}
f({\alpha_{k,t}},{\alpha_{- k,t}}) - f(\alpha _{k,t}',{\alpha_{- k,t}})= \sum\limits_{f \in \mathcal{F}} \mathbbm{1}(\mu_k^{(t)} = f) {p_k}(I_f + D_f - b_f^{(t)} D_f) ( \frac{1}{r_{k,t}} - \frac{1}{r_{k,t}'}).
\end{equation}
According to the definition of uplink rate and channel interference in (\ref{eq:link_rate}) and (\ref{ieq:off_not}), ${\mathop{\text{sgn}}} ( \frac{1}{r_{k,t}} - \frac{1}{r_{k,t}'} ) = \mathop{\text{sgn}}( \Upsilon_{k,t} - \Upsilon_{k,t}' )$ is established. Hence,  Eq. (\ref{eq:conditonPG}) is established in this case.

\item ${\alpha_{k,t}} > 0,\alpha_{k,t}' = 0$. Similarly, according to (\ref{eq:phi_rewrite}), we have
\begin{equation}\label{eq:diff_phi1}
\setlength{\abovedisplayskip}{1pt}
\setlength{\belowdisplayskip}{1pt}
\phi(\alpha_{k,t},\alpha_{- k,t}) \!-\! \phi (\alpha_{k,t}',\alpha_{- k,t}) \!=\! {p_k h_k} \bigg{(}\sum\limits_{n \ne k} {{p_n h_n}\mathbbm{1}({\alpha_{n,t}} \!=\! {\alpha_{k,t}})}  \!-\! {V_k}\bigg{)} \!=\!{p_k h_k}(\Upsilon_{k,t} \!-\! V_k ).
\end{equation}
Furthermore, according to (\ref{eq:energy_kt}), we have
\begingroup
\setlength{\abovedisplayskip}{0pt}
\setlength{\belowdisplayskip}{0pt}
\begin{equation}
f({\alpha_{k,t}},{\alpha_{ - k,t}}) - f(\alpha _{k,t}',{\alpha_{- k,t}}) =\sum\nolimits_{f \in \mathcal{F}} \mathbbm{1}(\mu_k^{(t)} = f)\Big{(} {p_k \frac{I_f + D_f - b_f^{(t)} D_f}{r_{k,t}} - \zeta \frac{S_f^3}{\tau^2}} \Big{)}.
\end{equation}
\endgroup
According to the analysis of (\ref{ieq:off_not}), we have ${\mathop{\text{sgn}}} \left( {\Upsilon_{k,t} - V_k} \right) = {\mathop{\text{sgn}}} ({p_k \frac{I_f + D_f - b_f^{(t)} D_f}{r_{k,t}} - \zeta \frac{S_f^3}{\tau^2}} )$. Thus, Eq. (\ref{eq:conditonPG}) is established in this case.

\item ${\alpha_{k,t}} = 0,\alpha_{k,t}' > 0$. This case is similar with case 2. Eq. (\ref{eq:conditonPG}) is also established in this case.
\end{enumerate}
Summarize the above results, Eq. (\ref{eq:conditonPG}) is established in any case. Consequently, game $\bm{G}$ is a ordinal potential game and can achieve a NE solution after finite number of iterations \cite{yamamoto2015comprehensive}.
\vspace{-0.6cm}
\subsection{Proof of Lemma \ref{lem:gameComplexity}}\label{App:A3}
For ease of presentation, we define $\Delta_{\max }= \mathop {\max}\nolimits_{k \in \mathcal{K}} \left\{p_k h_k\right\}$, $\Delta_{\min} = \mathop {\min}\nolimits_{k \in \mathcal{K}} \left\{ p_k h_k \right\}$, $V_{\max} = \mathop {\max }\nolimits_{k \in \mathcal{K}} \left\{ V_k\right\}$, $V_{\min} = \mathop {\min}\nolimits_{k \in \mathcal{K}} \left\{ V_k \right\}$. For the potential function, we have
\begin{align}
&\phi(\bm{\alpha}_t) \overset{(a)}= \frac{1}{2}\sum\nolimits_{k = 1}^K  \sum\nolimits_{n \ne k} {p_k h_k p_n h_n \mathbbm{1}(\alpha_{n,t} = \alpha_{k,t})} \mathbbm{1}(\alpha_{k,t} > 0) + \sum\nolimits_{k = 1}^K  {p_k}{h_k}{V_k}\mathbbm{1}({\alpha_{k,t}} = 0) \notag \\
&\le \frac{1}{2}\sum\nolimits_{k = 1}^K \sum\nolimits_{n \ne k} {\Delta_{\max }^2 \mathbbm{1}(\alpha_{n,t} = \alpha_{k,t})} \mathbbm{1}(\alpha_{k,t} > 0) + \sum\nolimits_{k = 1}^K  {\Delta_{\max}}{V_{\max}}\mathbbm{1}(\alpha_{k,t} = 0) \notag \\
&\le \frac{1}{2}{K^2}\Delta_{\max}^2 + K{\Delta_{\max }}{V_{\max}},
\end{align}
where (a) follows from (\ref{eq:potential_f}).

The COMO algorithm first initializes the COMO decisions of all users as 0, the initial value of $\phi(\bm{\alpha}_t)$ is $\phi(0) = \sum\nolimits_{k = 1}^K p_k h_k V_k \ge K{\Delta_{\min}}{V_{\min}}$.
Thus, the value range of $\phi(\bm{\alpha}_t)$ is less than $\frac{1}{2}{K^2}\Delta_{\max }^2 + K({\Delta_{\max}}{V_{\max}} - {\Delta_{\min}}{V_{\min}})$.
In each iteration, there is one user to update its decision to decrease the computing cost. Based on the definition of potential game, the decision update also decreases the value of potential function. It is assumed that user $k$ updates its offloading decision $\alpha_{k,t}$ to a better decision $\alpha_{k,t}'$ in one iteration, i.e., $\phi (\alpha_{k,t},\alpha_{- k,t}) - \phi (\alpha_{k,t}' ,\alpha_{- k,t}) > 0$. Below we analyze the decrement of $\phi(\bm{\alpha}_t)$ in each iteration in three cases.
\begin{enumerate}[label={\arabic*)}]
\item ${\alpha_{k,t}} > 0$ and $\alpha_{k,t}' > 0$.
\begingroup
\setlength{\abovedisplayskip}{0pt}
\setlength{\belowdisplayskip}{1pt}
\begin{align}
\phi(\alpha_{k,t},\alpha_{- k,t}) \!-\! \phi (\alpha_{k,t}' ,\alpha_{- k,t}) \overset{(a)}=  {p_k}{h_k}\sum\limits_{n \ne k} {p_n h_n} \Big{(}\mathbbm{1}(\alpha_{k,t} \!=\! \alpha_{n,t}) \!-\! \mathbbm{1}(\alpha_{k,t}' \!=\! \alpha_{n,t}) \Big{)} >0.
\end{align}
\endgroup
where (a) follows from (\ref{eq:diff_phi}). Since the value of indicator function $\mathbbm{1}(\cdot)$ is integer, we have
\begin{align}
\sum\nolimits_{n \ne k} {p_n h_n} \left( \mathbbm{1}(\alpha_{k,t} = \alpha_{n,t}) - \mathbbm{1}(\alpha_{k,t}'  = \alpha_{n,t}) \right) \ge {\Delta_{\min}}.
\end{align}
Consequently, $\phi ({\alpha_{k,t}},{\alpha_{- k,t}}) - \phi (\alpha_{k,t}',{\alpha_{- k,t}}) \ge \Delta_{\min }^2$.

\item $\alpha_{k,t} > 0,\alpha_{k,t}' = 0$.
\begin{align}
\phi ({\alpha_{k,t}},{\alpha_{- k,t}}) - \phi (\alpha_{k,t}',{\alpha_{-k,t}}) \overset{(a)}= {p_k h_k}\Big{(} {\sum\nolimits_{n \ne k} {{p_n h_n}\mathbbm{1}(\alpha_{n,t} = \alpha_{k,t})} - V_k} \Big{)} > 0.
\end{align}
where (a) follows from (\ref{eq:diff_phi1}).
Thus, there is a positive number $\varepsilon = \sum\nolimits_{n \ne k} p_n h_n \mathbbm{1}(\alpha_{n,t} = \alpha_{k,t}) - V_k$, subject to $\phi ({\alpha_{k,t}},{\alpha_{- k,t}}) - \phi ({\alpha_{k,t}}',{\alpha_{- k,t}}) = \varepsilon {p_k}{h_k} \ge \varepsilon {\Delta_{\min }}$

\item $\alpha_{k,t} = 0,\alpha_{k,t}' > 0$. Similar to case 2, there is a positive integer $\varepsilon$ such that $\phi(\alpha_{k,t},\alpha_{-k,t}) - \phi (\alpha_{k,t}',\alpha_{-k,t}) \ge \varepsilon {\Delta_{\min}}$.

\end{enumerate}

Summarizing the above three cases, we have $\phi ({\alpha_{k,t}},{\alpha_{- k,t}}) - \phi (\alpha_{k,t}',\alpha_{-k,t}) \ge \varepsilon {\Delta_{\min }}$, where $\varepsilon $ is a positive number. That is to say, in each iteration, the potential function will decrease at least $\varepsilon {\Delta_{\min }}$. Accordingly, the algorithm will terminate within $\frac{{\frac{1}{2}{K^2}\Delta_{\max}^2 + K(\Delta_{\max} V_{\max} - \Delta_{\min} V_{\min})}}{{\varepsilon \Delta_{\min}}}$ iterations and obtain a NE solution for COMO problem.

\bibliographystyle{IEEEtran}
\bibliography{IEEEabrv,cited}

\begin{thebibliography}{10}
\providecommand{\url}[1]{#1}
\csname url@rmstyle\endcsname
\providecommand{\newblock}{\relax}
\providecommand{\bibinfo}[2]{#2}
\providecommand\BIBentrySTDinterwordspacing{\spaceskip=0pt\relax}
\providecommand\BIBentryALTinterwordstretchfactor{4}
\providecommand\BIBentryALTinterwordspacing{\spaceskip=\fontdimen2\font plus
\BIBentryALTinterwordstretchfactor\fontdimen3\font minus
  \fontdimen4\font\relax}
\providecommand\BIBforeignlanguage[2]{{%
\expandafter\ifx\csname l@#1\endcsname\relax
\typeout{** WARNING: IEEEtran.bst: No hyphenation pattern has been}%
\typeout{** loaded for the language `#1'. Using the pattern for}%
\typeout{** the default language instead.}%
\else
\language=\csname l@#1\endcsname
\fi
#2}}

\bibitem{9363323}
Y.~Siriwardhana, P.~Porambage, M.~Liyanage, and M.~Ylianttila, ``A survey on
  mobile augmented reality with 5{G} mobile edge computing: Architectures,
  applications, and technical aspects,'' \emph{IEEE Commun. Surveys Tuts.},
  vol.~23, no.~2, pp. 1160--1192, 2021.

\bibitem{mao2017survey}
Y.~Mao, C.~You, J.~Zhang, K.~Huang, and K.~B. Letaief, ``A survey on mobile
  edge computing: The communication perspective,'' \emph{IEEE Commun. Surveys
  Tuts.}, vol.~19, no.~4, pp. 2322--2358, 2017.

\bibitem{sabella2016mobile}
D.~Sabella, A.~Vaillant, P.~Kuure, U.~Rauschenbach, and F.~Giust, ``Mobile-edge
  computing architecture: The role of mec in the internet of things,''
  \emph{IEEE Consum. Electron. Mag.}, vol.~5, no.~4, pp. 84--91, 2016.

\bibitem{mach2017mobile}
P.~Mach and Z.~Becvar, ``Mobile edge computing: A survey on architecture and
  computation offloading,'' \emph{IEEE Commun. Surveys Tuts.}, vol.~19, no.~3,
  pp. 1628--1656, 2017.

\bibitem{sun2020online}
Z.~Sun and M.~R. Nakhai, ``An online learning algorithm for distributed task
  offloading in multi-access edge computing,'' \emph{IEEE Trans. Signal
  Processing}, vol.~68, pp. 3090--3102, 2020.

\bibitem{alameddine2019dynamic}
H.~A. Alameddine, S.~Sharafeddine, S.~Sebbah, S.~Ayoubi, and C.~Assi, ``Dynamic
  task offloading and scheduling for low-latency {IoT} services in multi-access
  edge computing,'' \emph{IEEE J. Selected Areas Commun.}, vol.~37, no.~3, pp.
  668--682, 2019.

\bibitem{yu2020joint}
Z.~Yu, Y.~Gong, S.~Gong, and Y.~Guo, ``Joint task offloading and resource
  allocation in {UAV}-enabled mobile edge computing,'' \emph{IEEE Internet
  Things J.}, vol.~7, no.~4, pp. 3147--3159, 2020.

\bibitem{zhang2020dynamic}
Q.~Zhang, L.~Gui, F.~Hou, J.~Chen, S.~Zhu, and F.~Tian, ``Dynamic task
  offloading and resource allocation for mobile-edge computing in dense cloud
  {RAN},'' \emph{IEEE Internet Things J.}, vol.~7, no.~4, pp. 3282--3299, 2020.

\bibitem{zhao2021energy}
M.~Zhao, J.-J. Yu, W.-T. Li, D.~Liu, S.~Yao, W.~Feng, C.~She, and T.~Q. Quek,
  ``Energy-aware task offloading and resource allocation for time-sensitive
  services in mobile edge computing systems,'' \emph{IEEE Trans. Veh.
  Technol.}, vol.~70, no.~10, pp. 10\,925--10\,940, 2021.

\bibitem{8467992}
T.~X. Tran and D.~Pompili, ``Adaptive bitrate video caching and processing in
  mobile-edge computing networks,'' \emph{IEEE Trans. Mobile Computing},
  vol.~18, no.~9, pp. 1965--1978, 2019.

\bibitem{8314696}
M.~Chen and Y.~Hao, ``Task offloading for mobile edge computing in software
  defined ultra-dense network,'' \emph{IEEE J. Sel. Areas Commun.}, vol.~36,
  no.~3, pp. 587--597, 2018.

\bibitem{wen2020joint}
W.~Wen, Y.~Cui, T.~Q. Quek, F.-C. Zheng, and S.~Jin, ``Joint optimal software
  caching, computation offloading and communications resource allocation for
  mobile edge computing,'' \emph{IEEE Trans. Veh. Technol.}, vol.~69, no.~7,
  pp. 7879--7894, 2020.

\bibitem{yan2021pricing}
J.~Yan, S.~Bi, L.~Duan, and Y.-J.~A. Zhang, ``Pricing-driven service caching
  and task offloading in mobile edge computing,'' \emph{IEEE Trans. Wireless
  Commun.}, vol.~20, no.~7, pp. 4495--4512, 2021.

\bibitem{9509427}
Z.~Chen, Z.~Zhou, and C.~Chen, ``Code caching-assisted computation offloading
  and resource allocation for multi-user mobile edge computing,'' \emph{IEEE
  Trans. Netw. Service Manag.}, vol.~18, no.~4, pp. 4517--4530, 2021.

\bibitem{9076825}
S.~Bi, L.~Huang, and Y.-J.~A. Zhang, ``Joint optimization of service caching
  placement and computation offloading in mobile edge computing systems,''
  \emph{IEEE Trans. Wireless Commun.}, vol.~19, no.~7, pp. 4947--4963, 2020.

\bibitem{8401954}
W.~Yi, Y.~Liu, and A.~Nallanathan, ``Cache-enabled hetnets with millimeter wave
  small cells,'' \emph{IEEE Trans. Commun.}, vol.~66, no.~11, pp. 5497--5511,
  Nov. 2018.

\bibitem{xing2019energy}
H.~Xing, J.~Cui, Y.~Deng, and A.~Nallanathan, ``Energy-efficient proactive
  caching for fog computing with correlated task arrivals,'' in \emph{in Proc.
  SPAWC}.\hskip 1em plus 0.5em minus 0.4em\relax IEEE, 2019, pp. 1--5.

\bibitem{yang2019joint}
X.~Yang, Z.~Fei, J.~Zheng, N.~Zhang, and A.~Anpalagan, ``Joint multi-user
  computation offloading and data caching for hybrid mobile cloud/edge
  computing,'' \emph{IEEE Trans. Veh. Technol.}, vol.~68, no.~11, pp.
  11\,018--11\,030, 2019.

\bibitem{chen2020dynamic}
Z.~Chen and Z.~Zhou, ``Dynamic task caching and computation offloading for
  mobile edge computing,'' in \emph{Proc. IEEE GLOBECOM}.\hskip 1em plus 0.5em
  minus 0.4em\relax IEEE, 2020, pp. 1--6.

\bibitem{9013927}
Z.~Chen, Z.~Chen, and Y.~Jia, ``Integrated task caching, computation offloading
  and resource allocation for mobile edge computing,'' in \emph{Proc. IEEE
  GLOBECOM}, 2019, pp. 1--6.

\bibitem{bi2020joint}
S.~Bi, L.~Huang, and Y.-J.~A. Zhang, ``Joint optimization of service caching
  placement and computation offloading in mobile edge computing systems,''
  \emph{IEEE Trans. Wireless Commun.}, vol.~19, no.~7, pp. 4947--4963, 2020.

\bibitem{zhang2018joint}
J.~Zhang, X.~Hu, Z.~Ning, E.~C.-H. Ngai, L.~Zhou, J.~Wei, J.~Cheng, B.~Hu, and
  V.~C. Leung, ``Joint resource allocation for latency-sensitive services over
  mobile edge computing networks with caching,'' \emph{IEEE Internet Things
  J.}, vol.~6, no.~3, pp. 4283--4294, 2018.

\bibitem{8778670}
P.~Wu, J.~Li, L.~Shi, M.~Ding, K.~Cai, and F.~Yang, ``Dynamic content update
  for wireless edge caching via deep reinforcement learning,'' \emph{IEEE
  Commun. Lett.}, vol.~23, no.~10, pp. 1773--1777, 2019.

\bibitem{9110932}
Y.~Qian, R.~Wang, J.~Wu, B.~Tan, and H.~Ren, ``Reinforcement learning-based
  optimal computing and caching in mobile edge network,'' \emph{IEEE J. Sel.
  Areas Commun.}, vol.~38, no.~10, pp. 2343--2355, 2020.

\bibitem{8491367}
J.~Zhang, X.~Hu, Z.~Ning, E.~C.-H. Ngai, L.~Zhou, J.~Wei, J.~Cheng, B.~Hu, and
  V.~C.~M. Leung, ``Joint resource allocation for latency-sensitive services
  over mobile edge computing networks with caching,'' \emph{IEEE Internet
  Things J.}, vol.~6, no.~3, pp. 4283--4294, 2019.

\bibitem{9419788}
R.~Zheng, H.~Wang, M.~De~Mari, M.~Cui, X.~Chu, and T.~Q.~S. Quek, ``Dynamic
  computation offloading in ultra-dense networks based on mean field games,''
  \emph{IEEE Trans. Wireless Commun.}, vol.~20, no.~10, pp. 6551--6565, 2021.

\bibitem{8418400}
Y.~Sun, Y.~Cui, and H.~Liu, ``Joint pushing and caching for bandwidth
  utilization maximization in wireless networks,'' \emph{IEEE Trans. Commun.},
  vol.~67, no.~1, pp. 391--404, 2019.

\bibitem{rappaport1996wireless}
T.~S. Rappaport \emph{et~al.}, \emph{Wireless communications: principles and
  practice}.\hskip 1em plus 0.5em minus 0.4em\relax prentice hall PTR New
  Jersey, 1996, vol.~2.

\bibitem{7307234}
X.~Chen, L.~Jiao, W.~Li, and X.~Fu, ``Efficient multi-user computation
  offloading for mobile-edge cloud computing,'' \emph{IEEE/ACM Trans. Netw.},
  vol.~24, no.~5, pp. 2795--2808, 2016.

\bibitem{1194818}
M.~Xiao, N.~Shroff, and E.~Chong, ``A utility-based power-control scheme in
  wireless cellular systems,'' \emph{IEEE/ACM Trans. Netw.}, vol.~11, no.~2,
  pp. 210--221, 2003.

\bibitem{chiang2008power}
M.~Chiang, P.~Hande, T.~Lan, C.~W. Tan, \emph{et~al.}, ``Power control in
  wireless cellular networks,'' \emph{Found. Trends Netw.}, vol.~2, no.~4, pp.
  381--533, 2008.

\bibitem{miettinen2010energy}
A.~P. Miettinen and J.~K. Nurminen, ``Energy efficiency of mobile clients in
  cloud computing.'' \emph{HotCloud}, vol.~10, pp. 1--7, 2010.

\bibitem{9340334}
A.~Bozorgchenani, D.~Tarchi, and W.~Cerroni, ``On-demand service deployment
  strategies for fog-as-a-service scenarios,'' \emph{IEEE Commun. Letters},
  vol.~25, no.~5, pp. 1500--1504, 2021.

\bibitem{van2016deep}
H.~Van~Hasselt, A.~Guez, and D.~Silver, ``Deep reinforcement learning with
  double {Q}-learning,'' in \emph{in Proc AAAI}, vol.~30, no.~1, 2016.

\bibitem{sutton2018reinforcement}
R.~S. Sutton and A.~G. Barto, \emph{Reinforcement learning: An
  introduction}.\hskip 1em plus 0.5em minus 0.4em\relax MIT press, 2018.

\bibitem{agarwal2020optimistic}
R.~Agarwal, D.~Schuurmans, and M.~Norouzi, ``An optimistic perspective on
  offline reinforcement learning,'' in \emph{in Proc. ICML}, 2020, pp.
  104--114.

\bibitem{goodfellow2016deep}
I.~Goodfellow, Y.~Bengio, and A.~Courville, \emph{Deep learning}.\hskip 1em
  plus 0.5em minus 0.4em\relax MIT press, 2016.

\bibitem{garey1979computers}
M.~R. Garey and D.~S. Johnson, \emph{Computers and intractability}.\hskip 1em
  plus 0.5em minus 0.4em\relax freeman San Francisco, 1979, vol. 174.

\bibitem{8362880}
G.~Hasslinger, J.~Heikkinen, K.~Ntougias, F.~Hasslinger, and O.~Hohlfeld,
  ``Optimum caching versus {LRU} and {LFU}: Comparison and combined limited
  look-ahead strategies,'' in \emph{in Proc. WiOpt}, 2018, pp. 1--6.

\bibitem{loh2009solving}
K.-H. Loh, B.~Golden, and E.~Wasil, ``Solving the maximum cardinality bin
  packing problem with a weight annealing-based algorithm,'' in
  \emph{Operations Research and Cyber-Infrastructure}.\hskip 1em plus 0.5em
  minus 0.4em\relax Springer, 2009, pp. 147--164.

\bibitem{yamamoto2015comprehensive}
K.~Yamamoto, ``A comprehensive survey of potential game approaches to wireless
  networks,'' \emph{IEICE Trans. Commun.}, vol.~98, no.~9, pp. 1804--1823,
  2015.

\end{thebibliography}
\end{document}